\newtheorem{theorem}{Theorem}[section]
\newtheorem{lemma}[theorem]{Lemma}
\newtheorem{remark}[theorem]{Remark}
\newtheorem{example}[theorem]{Example}
\newtheorem{assumption}[theorem]{Assumption}
\newtheorem{claim}[theorem]{Claim}
\newcommand{\indicator}[1]{\mathbf{1}_{\{ #1 \}}}
\definecolor{myblue}{RGB}{0, 102, 204}
\begin{document}

\title{On the utility problem in a market where price impact is transient}

\author{L\'or\'ant Nagy\thanks{HUN-REN Alfr\'ed R\'enyi Institute of Mathematics, Budapest, Hungary} \and Mikl\'os 
R\'asonyi\thanks{HUN-REN Alfr\'ed R\'enyi Institute of Mathematics and E\"otv\"os Lor\'and University, Budapest, 
Hungary; E-mail:\ rasonyi@renyi.hu}}


\date{\today}

\maketitle

\begin{abstract}
We consider a discrete-time model of a financial market where a risky asset is bought and sold with transactions having
a transient price impact. It is shown that the corresponding utility maximization problem admits a solution. We manage to remove
some unnatural restrictions on the market depth and resilience processes that were present in earlier work. A non-standard 
feature of the problem is that the set of attainable portfolio values may fail the convexity property.
\end{abstract}

\noindent\textbf{Keywords:} price impact; market friction; optimal investment; dynamic programming; 
nonconvex domain of optimization

\noindent\textbf{MSC 2020:} Primary: 93E20, 91B70, 91B16; secondary: 91G10, 28B20


\section{Introduction}

Investors' actions move market prices and make large position changes costly. More or less realistic models for
this \emph{price impact} phenomenon have been worked out in the mathematical finance literature.
One of the principal questions is the study of optimal investment in the presence of these (nonlinear)
transaction costs.

Price impact may be assumed \emph{instantaneous} if it affects the investor only at the moment of his/her portfolio rebalancing. 
This assumption leads to a relatively simple market dynamics,
see \cite{gr,pym}. At the other extreme, price impact may be \emph{permanent}, in which case
the investor's action pushes the price in a direction and this effect pertains to the whole future.
The most accurate description of reality is probably in between: price impact should
be \emph{transient}, with a gradually fading effect for the future, see \cite{peter-yan,bank-voss}.
The speed at which these effects disappear is characterized by \emph{market resilience}: if $r$ is
resilience then the bid-ask spread is diminished by a factor of $e^{-r}$ in one unit of time. The
size of the disturbance
caused by trading a unit amount of the asset is described by \emph{market depth}: if
$\delta$ is market depth then $1/\delta$ is the effect on the bid-ask spread of trading one unit of the risky asset. 
  
In the present article we prove that the uility maximization problem in discrete time for an agent
trading with transient price impact is well-posed: it admits a solution.
The problem under consideration has a complex, non-linear dynamics involving
all previous strategies at a given time. Moreover, the domain of optimization 
is \emph{non-convex}, which is a highly unusual feature. 

A continuous-time model with \emph{instantaneous} price impact was considered in \cite{gr}: 
they proved in their Theorem 5.1 that the utility maximization problem (with a concave utility function) admits
a solution under mild conditions. 

In \cite{bank-voss} a more advanced model with \emph{transient} 
price impact was treated where the markets' resiliance and depth were both assumed constant.
Theorem 3.3 of \cite{bank-voss} proved the existence of an optimal investment strategy in such a setting.

In \cite{peter-yan} market resilience and depth were both allowed to be stochastic processes
but a related monotonicity condition was imposed in their Assumption 2.4 which implies
that the set of attainable portfolio values is convex. That condition holds for
constant resilience and depth but otherwise it is rather restrictive. The paper \cite{peter-yan} did not
provide a general existence theorem for optimizers but a superreplication result (Theorem 3.2 in \cite{peter-yan})
with a dual characterization of contingent claims that can be superhedged from a given initial
position. They also proved a verification result (Corollary 3.5 in \cite{peter-yan}): a sufficient condition
implying that a given strategy is optimal.

In the present work we are dealing with the discrete-time version of the model of
\cite{peter-yan}. Our purpose is to prove the existence of an optimal strategy for
a utility maximizer while removing the stringent monotonicity assumption of \cite{peter-yan}
on market resilience and market depth, see Theorem \ref{main} and Remark \ref{megjegyzes} below. 

Section \ref{setup} presents the technical details of our model, the main result (Theorem \ref{main}) and some discussions
about the lack of convexity and its implications. Proofs will 
then be provided starting with Section \ref{poof}. Section \ref{single} deals with the one-step case while Section \ref{dyna} 
provides the dynamic programming argument. 
Further reflections are given in Section \ref{conclu}.

\section{Setup and results}\label{setup}

For $x\in\mathbb{R}$ we denote by $x^{+},x^{-}$ the positive and negative parts of $x$.
Fix a probability space $(\Omega,\mathcal{F},\mathbb{P})$ once and for all, together with a discrete-time
filtration $\mathcal{F}_{t}$, $t=0,\ldots,T$ where $\mathcal{F}_{0}$ coincides with $\mathbb{P}$-null sets. 
Mathematical expectation with respect to $\mathbb{P}$ will be denoted by $\mathbb{E}[\,\cdot\,]$,{}
$\mathbb{E}_{t}[\, \cdot\,]$ stands for $\mathcal{F}_{t}$-conditional expectation. $L^{0}$ is the set of all
real-valued random variables. $\mathbf{1}_{A}$ denotes the indicator function of a set $A$.

We now present the discrete-time version of the model in \cite{peter-yan}.
The time horizon of the investor will be some $T\in\mathbb{N}$. In the $T$th step the investor
must liquidate her position in the risky asset hence genuine decisions are made only at 
the previous times $t=1,\ldots, T-1$. To have a nontrivial problem we thus need to assume $T\geq 2$.

The risky asset's midprice (that is, the middle point of the bid-ask spread)
is described by an adapted real-valued process $P_{t}$, $t=0,\ldots,T$.{}
This is the price followed when there is no trading from the part of the investor in consideration.

Position in the risky asset at time $t$ is denoted by $X_{t}$, $t=0,\ldots,T$, we assume $X_{0}:=0$.
At each time $t$ the investor makes a portfolio adjustment based on previous information (before the new price $P_{t}$
is revealed) hence $X_{t}$ is assumed $\mathcal{F}_{t-1}$ measurable, that is, the strategy process
$X_{t}$, $1\leq t\leq T$ is predictable. We follow the convention $X_{-1}:=0$.
The set of all strategies is denoted by $\mathcal{A}$. We define
$$
\mathcal{A}_{0}:=\{X\in\mathcal{A}:X_{T}=0\},
$$
the set of strategies that liquidate the position in the risky asset by the end of the time horizon.
We note here, that due to the dynamics utilized in the paper, presented below, maximization of the utility of the terminal wealth $\xi_{T}^{X}$ in $X$ is economically meaningful only
over the set of strategies $\mathcal{A}_{0}$. Outside of $\mathcal{A}_0$, an investor could attain a position in the bank account with favourable expected utility while having large negative positions, and we would need to deal with liquidation value: such scenarios will be excluded.


We model liquidity with two primitives, resilience rate, and market depth: market resilience is described by a non-negative adapted process denoted by $r_{t}\geq 0$, $t=0,\ldots,T-1$, and market depth is a positive adapted process $\delta_{t}>0$, $t=1,\ldots,T$. The \emph{half-spread} follows a linear dynamics, namely
\begin{equation}\label{spred}
\zeta^{X}_{t+1}=e^{-r_{t}}\zeta^{X}_{t}+\frac{1}{\delta_{t+1}}|X_{t+1}-X_{t}|,\ 0\leq t\leq T-1,
\end{equation}
starting from an initial value $\zeta_{0}^{X}:=\zeta_{0}\geq 0$.
Finally, the cash account at time $t=1,\ldots,T$, considering that the investor pays the spread when trading, is calculated as
\begin{equation}\label{baccount}
\xi^{X}_{t+1}-\xi_{t}^{X}=-P_{t+1}(X_{t+1}-X_{t})-\zeta^{X}_{t+1}|X_{t+1}-X_{t}|,\ 0\leq t\leq T-1.
\end{equation}
Setting $\xi^{X}_{0}:=0$ for simplicity, with initial capital $z\in\mathbb{R}$, the investor has a wealth of $z+\xi_{T}^{X}$ at time $T$. The above model is the discrete-time counterpart of the model introduced in \cite{peter-yan}: except that in our setup only transient impact is modeled.

We further assume that the investor may possibly receive a random endowment during the trading period, 
described by an $\mathcal{F}_{T}$-measurable $\mathbb{R}$-valued
random variable $B$. Negative $B$ means that the investor has certain
payment obligations during the period considered.

The investor aims to maximize her expected utility from terminal wealth, hence we fix a \emph{utility function} $u:\mathbb{R}\to\mathbb{R}$.

\begin{assumption}\label{uint} The function $u$ is non-decreasing, continuous, $\lim_{x \to -\infty}u(x)=-\infty$, and $u$ is bounded from above.
We furthermore assume that for all $x,y,z\in\mathbb{R}$ and for all $t=1,\ldots,T-1$, 
\begin{equation}\label{inti}
\mathbb{E}[u(x+y P_{t}+zP_{T}-B)]>-\infty
\end{equation}
holds.
\end{assumption}

\begin{remark}{\rm If $u$ is concave then \eqref{inti} is implied by the simpler condition
\begin{equation*}
\mathbb{E}[u(x+y P_{t}-B)]>-\infty,\ t=1,\ldots,T.
\end{equation*}
Indeed, by concavity of the mapping $v\to u(x+v-B)$,
$$
\mathbb{E}\left[u(x+y P_{t}+zP_{T}-B)\right]\geq \frac{\mathbb{E}\left[u(x+2y P_{t}-B)\right]+\mathbb{E}\left[u(x+2zP_{T}-B)\right]}{2}.
$$
}
\end{remark}

Our next hypothesis stipulates that market depth is always above a fixed threshold.

\begin{assumption}\label{deltacska}
There is a constant $\delta>0$ such that $\delta_{t}\geq \delta$ almost surely, for all $t=1,\ldots,T$.	
\end{assumption}

Our main result asserts that an investor with an arbitrary initial capital $z\in\mathbb{R}$ may
find an optimal portfolio strategy $X^{*}(z)$.

\begin{theorem}\label{main} Let Assumption \ref{uint} and Assumption \ref{deltacska} be in force.
Then, for each $z\in\mathbb{R}$ there exists $X^{*}(z)\in \mathcal{A}_{0}$ such that
$$
\bar{u}(z):=\mathbb{E}\left[u\left(z+\xi^{X^{*}(z)}_{T}-B\right)\right]=\sup_{X\in\mathcal{A}_{0}}\mathbb{E}\left[u\left(z+\xi^{X}_{T}-
B\right)\right].
$$
\end{theorem}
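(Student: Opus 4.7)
The plan is to establish existence by backward dynamic programming, reducing the global problem to a sequence of one-step measurable optimization problems and pasting the resulting selectors into a single strategy in $\mathcal{A}_0$. Because the dynamics \eqref{spred}--\eqref{baccount} are Markovian once the state at time $t$ is augmented with the pair $(\zeta^X_t,\xi^X_t)$, I would introduce a value function
$$
V_t(x,\zeta,\xi) \;=\; \operatorname{ess\,sup}_{Y}\mathbb{E}_t\!\left[u\!\left(z+\xi_T^{Y}-B\right)\right],
$$
where the essential supremum ranges over continuations of the strategy starting from state $(X_t,\zeta_t^X,\xi_t^X)=(x,\zeta,\xi)$ and satisfying $Y_T=0$. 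The backward recursion $V_t=\operatorname{ess\,sup}_{X_{t+1}}\mathbb{E}_t[V_{t+1}(X_{t+1},\zeta_{t+1}^X,\xi_{t+1}^X)]$ is then the object of the dynamic programming argument of Section \ref{dyna}, while attainment of each essential supremum by a measurable $X_{t+1}^{*}$ is the content of the one-step problem treated in Section \ref{single}.

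The driver of attainment is a coercivity estimate coming from Assumption \ref{deltacska}. From \eqref{spred} one has $\zeta_{s+1}^{X}\geq |X_{s+1}-X_s|/\delta$, and substituting into \eqref{baccount} yields the one-step bound
$$
\xi_{s+1}^{X}-\xi_s^{X} \;\leq\; |P_{s+1}|\,|X_{s+1}-X_s| \;-\; \tfrac{1}{\delta}(X_{s+1}-X_s)^2.
$$
Summing from $s=0$ to $T-1$, applying Young's inequality to the linear term and then Cauchy--Schwarz together with the boundary condition $X_T=0$, one obtains an $\omega$-wise bound of the form $\xi_T^{X}\leq C(\omega)-c\max_{t}|X_t|^2$ with $c>0$ and $C(\omega)$ depending only on $(P_1,\ldots,P_T)$. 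Because $u$ is bounded from above and satisfies $\lim_{x\to-\infty}u(x)=-\infty$, this confines the effective domain of optimization at step $t$ to an $\mathcal{F}_t$-measurable compact subset $K_t(\omega)\subset\mathbb{R}$, on which the objective needs only be upper semicontinuous for attainment.

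For the one-step problem I would then verify that the map $x\mapsto\mathbb{E}_t[V_{t+1}(x,\zeta_{t+1}^X(x),\xi_{t+1}^X(x))]$ is upper semicontinuous in $x$, using continuity of the state-update maps in $x$, continuity of $u$, and Fatou's lemma. The integrability needed to invoke Fatou is supplied by \eqref{inti}, which provides, stage by stage, an integrable envelope of the form $u(x+yP_t+zP_T-B)$ serving as a suitable dominant for $V_{t+1}$. Weierstrass then yields pointwise attainment over $K_t(\omega)$, and a measurable selection theorem (e.g.\ Kuratowski--Ryll-Nardzewski applied to the $\mathcal{F}_t$-measurable set-valued map of maximizers, which has closed nonempty values) delivers an $\mathcal{F}_t$-measurable $X_{t+1}^{*}$. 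Iterating this for $t=T-1,T-2,\ldots,1$ and pasting produces $X^{*}(z)\in\mathcal{A}_0$ realizing $\bar u(z)$.

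The main obstacle I expect is the lack of convexity: the absolute-value terms $|X_{s+1}-X_s|$ in both \eqref{spred} and \eqref{baccount} render $X\mapsto \xi_T^{X}$ non-concave, so standard convex-analytic or duality shortcuts are unavailable and the set of one-step maximizers is genuinely multi-valued, forcing an honest measurable selection rather than a uniqueness-based one. A secondary difficulty is that $\zeta_t^{X}$ and $\xi_t^{X}$ depend on the whole past path $(X_1,\ldots,X_t)$, so the inductive state is path-dependent; one must carefully track at every stage both the measurability of $V_t$ in its arguments and an integrable lower majorant strong enough to reapply Fatou at the next stage. This is precisely where Assumption \ref{uint}, and in particular \eqref{inti}, is fully consumed: it supplies, across all recursion steps, integrable benchmarks built from linear-in-price payoffs against which the value process can be controlled.
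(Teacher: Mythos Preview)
Your overall strategy---backward dynamic programming, one-step coercivity from the quadratic transaction cost, then measurable selection---is exactly the paper's. The paper differs in two implementation choices. First, it does \emph{not} use the Markov triple $(X_t,\zeta_t^X,\xi_t^X)$ as state; instead it carries the full increment history $(h_1,\ldots,h_t)$ together with accumulated wealth, defining random functions $\tilde G_t(x,h_1,\ldots,h_t)$ and optimizing over $h_{t+1}$. This is more redundant than your state but makes the integrable lower bound (your ``integrable benchmarks'') completely explicit: Claim~\ref{claim} bounds $\tilde G_{T-1}(x+\kappa_t(h),h)$ below by a minimum of four terms of the shape $u(x+D_m\pm mP_t\pm m(T-1)P_T-B)$, which is precisely where \eqref{inti} is spent. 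Second, the paper works with \emph{continuous} versions of the conditional value functions (via Banach-space-valued conditional expectation, Lemma~\ref{szeder}) rather than mere upper semicontinuity, and produces the selector by random-subsequence compactness (Lemma~\ref{randomsub}) rather than Kuratowski--Ryll-Nardzewski. Your Markovian/u.s.c.\ route is in principle viable and more economical, but you would still have to propagate both regularity and an integrable minorant through the recursion, which is where most of the paper's work sits.

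There is one concrete error in your coercivity step. Assumption~\ref{deltacska} is a \emph{lower} bound $\delta_t\geq\delta$, hence $1/\delta_{t+1}\leq 1/\delta$, and your claimed inequality $\zeta_{s+1}^X\geq |X_{s+1}-X_s|/\delta$ goes the wrong way: from \eqref{spred} one only gets $\zeta_{s+1}^X\geq |X_{s+1}-X_s|/\delta_{s+1}$. The quadratic coefficient in your bound is therefore the random $1/\delta_{s+1}(\omega)$, not the constant $1/\delta$, so you cannot conclude $\xi_T^X\leq C(\omega)-c\max_t|X_t|^2$ with deterministic $c>0$. This matters because $\delta_{t+1}$ is only $\mathcal{F}_{t+1}$-measurable, so an a~priori $\omega$-wise compact set for $X_{t+1}$ built from this bound need not be $\mathcal{F}_t$-measurable. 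The paper resolves this by first passing to the conditional expectation and \emph{then} invoking reverse Fatou (Lemma~\ref{lemma::infinite_loss}): since $u$ is bounded above, $\mathbb{E}[\,\bar G_0(x+\lambda_{t}(h))\,|\,\mathcal{H}]\to-\infty$ as $|h|\to\infty$ even though $\lambda_{t}(h)=-P_{t}h-h^2/\delta_{t}$ involves randomness beyond $\mathcal{H}$; only after this does one extract an $\mathcal{H}$-measurable compact $K$ (Lemma~\ref{lemma::kettes}). Your sketch elides this order of operations. The fix is routine, but it is the one place where Assumption~\ref{deltacska} is \emph{not} the operative hypothesis (it is used for the \emph{lower} bound on wealth in Claim~\ref{claim}, not for coercivity), and you should adjust the argument accordingly.
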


Theorem \ref{main} shows that, despite the possible lack of convexity for the set of attainable values
(see Example \ref{notconvex} below), the utility maximization
problem admits an optimal strategy. We will present the proof of Theorem \ref{main} 
in the coming sections, using
a customized dynamic programming procedure.

From now on, for any stochastic process $Z_{t}$, we denote its increments by 
$\Delta Z_{t}:=Z_{t}-Z_{t-1}$, $0\leq t\leq T$ with the convention $Z_{-1}:=0$. 
It is convenient to use another parametrization for strategies: for a given real-valued
process $H_{t}$, $1\leq t\leq T$ such that $H_{t}$ is $\mathcal{F}_{t-1}$-measurable,
we may assign a unique strategy $X_{t}$ such that $X_{0}=0$, $\Delta X_{t}=H_{t}$, 
$1\leq t\leq T$. For such strategies we will also use the alternative notations 
$\zeta^{H},\xi^{H}$ for the corresponding half-spread and portfolio value processes.
With a slight abuse of notation we will also write $H\in \mathcal{A}_{0}$ when
we really mean that the corresponding $X$ is in $\mathcal{A}_{0}$. Note that $H\in\mathcal{A}_{0}$
implies that $H_{T}=-\sum_{j=1}^{T-1}H_{j}$, in particular, $H_{T}$ is $\mathcal{F}_{T-2}$-measurable.

Introduce the notation $\rho_{j,t}:=\exp\left[ -\sum_{i=j}^{t-1}r_{i}\right]$, $1\leq t\leq T$,
$0\leq j\leq t$. Note that $\rho_{t,t}=1$.
From \eqref{spred} and \eqref{baccount} we derive the explicit formula
\begin{equation}\label{expli}
\xi_{T}^{X}=-\sum_{t=1}^{T} P_{t}H_{t}
-\sum_{t=1}^{T} |H_{t}|\left( \rho_{0,t}\zeta_{0} +\sum_{j=1}^{t} 
\frac{\rho_{j,t}}{\delta_{j}}|H_{j}|\right).	
\end{equation}

\begin{example}\label{notconvex}{\rm{} Let $T=3$, $r_{t}=0$ for all $0\leq t\leq 3$, $P_{0}=\zeta_{0}=0$ and let 
$P_{1},P_{2},P_{3}$ be independent standard Gaussian.
Let $\mathcal{F}_{t}$, $0\leq t\leq 3$ be the natural filtration of the process $P$.
Let $\delta_{1}=1$, $\delta_{2}=\delta_{3}=10$. We claim that the set
$$
\mathcal{S}:=\{Y\in L^{0}:\exists X\in\mathcal{A}_{0}\mbox{ such that }\xi^{X}_{T}\geq Y\}
$$
fails convexity. We argue by contradiction. 
Convexity would imply, in particular, that for arbitrary \emph{non-negative deterministic} 
strategies $H_{i},G_{i}$,
$i=1,2$ such that $H_{3}:=-H_{1}-H_{2}$, $G_{3}=-G_{1}-G_{2}$, there would exist some $\bar{H}_{i}$, $i=1,2,3$ such that
$$
\xi^{\bar{H}}_{T}\geq \frac{\xi^{H}_{T}+\xi^{G}_{T}}{2}
$$
almost surely. In view of \eqref{expli}, this inequality can be rewritten as
\begin{eqnarray}\nonumber
& & \frac{1}{2}\left[H_{1}^{2}+H_{2}H_{1}+\frac{1}{10}H_{2}^{2}+(H_{1}+H_{2})H_{1}+\frac{1}{10}(H_{1}+H_{2})H_{2}+
\frac{1}{10}(H_{1}+H_{2})^{2}
\right]\\ 
\nonumber &+& \frac{1}{2}\left[G_{1}^{2}+G_{2}G_{1}+\frac{1}{10}G_{2}^{2}+(G_{1}+G_{2})G_{1}+\frac{1}{10}(G_{1}+G_{2})G_{2}
+\frac{1}{10}(G_{1}+G_{2})^{2}\right]\\
\nonumber &-& \left[\bar{H}_{1}^{2}+\bar{H}_{2}\bar{H}_{1}+\frac{1}{10}\bar{H}_{1}^{2}+(\bar{H}_{1}+\bar{H}_{2})
\bar{H}_{1}+\frac{1}{10}(\bar{H}_{1}+\bar{H}_{2})\bar{H}_{2}+
\frac{1}{10}(\bar{H}_{1}+\bar{H}_{2})^{2}\right] \\
\nonumber &\geq&	{}
\left(\bar{H}_{3}-\frac{H_{3}}{2}-\frac{G_{3}}{2}\right)P_{3}+\left(\bar{H}_{2}-\frac{H_{2}}{2}-\frac{G_{2}}{2}\right)P_{2}+
\left(\bar{H}_{1}-\frac{H_{1}}{2}-\frac{G_{1}}{2}\right)P_{1}.
\end{eqnarray}
Notice that the $\mathcal{F}_{2}$-conditional law of $(\bar{H}_{3}-H_{3}/2-G_{3}/2)P_{3}$
is nondegenerate Gaussian on the set $\{\bar{H}_{3}-H_{3}/2-G_{3}/2\neq 0\}$. Also, the left-hand
side and the last two terms of the right-hand side are $\mathcal{F}_{2}$ measurable. Hence
the above inequality necessarily implies that $\mathbb{P}(\bar{H}_{3}-H_{3}/2-G_{3}/2\neq 0)=0$.
By an analogous argument, also $\bar{H}_{i}-H_{i}/2-G_{i}/2=0$, $i=1,2$. But then 
\begin{eqnarray*}
& & \frac{1}{2}\left[\frac{21}{10}H_{1}^{2}+\frac{3}{10}H_{2}^{2}+\frac{23}{10}H_{1}H_{2}+
\frac{21}{10}G_{1}^{2}+\frac{3}{10}G_{2}^{2}+\frac{23}{10}G_{1}G_{2}\right]\\ 
&\geq& 
\frac{21}{10}\bar{H}_{1}^{2}+\frac{3}{10}\bar{H}_{2}^{2}+\frac{23}{10}\bar{H}_{1}\bar{H}_{2}.
\end{eqnarray*}
The latter property, however, badly fails. Take, for instance, $H_{1}=H_{2}=1$ and $G_{1}=1.5$, $G_{2}=0$.
We conclude that $\mathcal{S}$ is \emph{not} a convex set.}
\end{example}

\begin{remark}\label{megjegyzes}{\rm The paper \cite{peter-yan} made a monotonicity
assumption. In the current discrete-time context it would require that the process $\rho_{0,t}^{2}\delta_{t}$ is a.s.
strictly decreasing in $t$. 
Under this hypothesis, \cite{peter-yan} showed that $\mathcal{S}$ is convex. The novel
contribution of our work is to drop such monotonicity assumptions and nevertheless to prove the existence 
of optimal strategies.}
\end{remark}


We finally point out in a simple example why
non-convexity of the domain of optimization may cause trouble in optimal investment problems.

\begin{example}\label{noncon}
{\rm We consider a one-step frictionless market model. 
Let the family of permitted strategies be $\Phi:=\{0,1\}$. That is, the investor
may take either a unit position in the risky asset or no position at all. Let us consider the utility function 
$u(x):=\ln(x)$ for $x>0$. 
Let the investor have initial capital $z>1$. Assume that the return on his investment is
a random variable $R$ with  $\mathbb{P}(R=2)=1/2=\mathbb{P}(R=-1)$. His indirect utility is then 
$$
\bar{u}(z):=\sup_{\phi\in\Phi} \mathbb{E}[u(z+\phi R)]=\max\left\{\ln(z),{}
\frac{\ln(z+2)+\ln(z-1)}{2}\right\},\ z>1.
$$
This function fails concavity: it is non-differentiable
at $2$ with the right-hand derivative being strictly larger than the left-hand derivative.

We conclude that even if the investor's utility is risk-averse (concave), his/her \emph{indirect} utility
may well fail this property when the family of permitted strategies is non-convex. Hence
in related multistep optimization problems
one needs to deal with a dynamic programming procedure involving \emph{non-concave} 
functions, as in \cite{laurence}.}
\end{example}

\section{Preparation for the proof}\label{poof}



Recall from \eqref{baccount} and \eqref{spred} that
$$
\Delta \xi_{t}^{H}=-P_{t} H_{t}-|H_{t}|
\left(\rho_{0,t}\zeta_{0}
+\sum_{j=1}^{t}  
\frac{\rho_{j,t}}{\delta_{j}}|H_{j}|\right).
$$
Inspired by this formula, for $1\leq t\leq T$ and $h = (h_{1},\ldots,h_{t})$ we introduce the random functions
\begin{equation}\label{kappafg}
\kappa_{t}(h):= 
-P_{t} h_{t}-|h_{t}|\left( \rho_{0,t}\zeta_{0}+\sum_{j=1}^{t} 
\frac{\rho_{j,t}}{\delta_{j}} |h_{j}|\right)
\end{equation}
for all $h_{1},\ldots,h_{t}\in\mathbb{R}$. Note that the mapping, describing the innovation corresponding to a deterministic action of the trader,
$$
h_{t}\to \kappa_{t}((h_{1},\ldots,h_{t-1},h_{t})),
$$
is \emph{concave} for every fixed $(h_{1},\ldots,h_{t-1})$, 
but $\kappa_{t}$, as a function of $t$ variables, has no reason to be concave.
Note also that innovation has an "action-independent" market bound, namely the quantity
\begin{equation}\label{lambda}
\kappa_{t}((h_{1},\ldots,h_{t}))\leq \lambda_{t}(h_{t}):=-P_{t} h_{t}
-\frac{h_{t}^{2}}{\delta_{t}} \leq \frac{P_t^2 \delta_t}{4}.
\end{equation}

We recall Lemma 6.8 of \cite{laurence}.

\begin{lemma}
\label{gast}
Let $(\Omega, {\cal H},P)$ be a complete probability space. Let $\Xi^d$ be the set of 
$\mathcal{H}$-measurable $d$-dimensional random variables.
Let $F:\Omega\times\mathbb{R}^d\to\mathbb{R}$ be a function such that for almost all $\omega\in\Omega$,
$F(\omega,\cdot)$ is continuous and for each $y\in\mathbb{R}^d$, $F(\cdot,y)$ is ${\cal H}$-measurable. Let $K>0$ be
an ${\cal H} $-measurable random variable.

Set $f(\omega)=\mathrm{ess.}\sup_{\xi\in \Xi^d, |\xi|\leq  K} F(\omega,\xi(\omega))$.  Then,
for almost all $\omega$,
\begin{eqnarray}
\label{lavieestbienfaite2}
f(\omega) & = & \sup_{y \in \mathbb{R}^d, |y| \leq K(\omega) } F(\omega,y).
\end{eqnarray}
\hfill $\square$
\end{lemma}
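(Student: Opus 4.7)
The plan is to leverage the Carath\'eodory structure of $F$ (measurability in $\omega$, continuity in $y$) together with separability of $\mathbb{R}^{d}$, and thereby avoid any sophisticated measurable-selection machinery by reducing everything to a countable dense family. Fix once and for all a countable dense subset $Q\subset\{z\in\mathbb{R}^{d}:|z|\leq 1\}$ and, for each $q\in Q$, set $\xi_{q}(\omega):=K(\omega)\,q$. Each $\xi_{q}$ is $\mathcal{H}$-measurable with $|\xi_{q}|\leq K$, so $\xi_{q}\in\Xi^{d}$ is an admissible selector; moreover $\omega\mapsto F(\omega,\xi_{q}(\omega))$ is $\mathcal{H}$-measurable by the Carath\'eodory property. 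Writing $\phi(\omega):=\sup_{y\in\mathbb{R}^{d},\,|y|\leq K(\omega)}F(\omega,y)$, continuity of $F(\omega,\cdot)$ on the compact ball $\{y:|y|\leq K(\omega)\}$ combined with the density of $K(\omega)\cdot Q$ in that ball yields $\phi(\omega)=\sup_{q\in Q}F(\omega,\xi_{q}(\omega))$ for almost every $\omega$, which in turn establishes the $\mathcal{H}$-measurability of $\phi$.

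With this representation in hand, the target identity $f=\phi$ a.s.\ splits into two inequalities. In the direction $f\leq\phi$, the measurability just obtained shows that $\phi$ is an $\mathcal{H}$-measurable upper bound of the family $\{F(\cdot,\xi(\cdot)):\xi\in\Xi^{d},\,|\xi|\leq K\}$, since for every such $\xi$ one has $F(\omega,\xi(\omega))\leq\phi(\omega)$ off a null set; the defining minimality of the essential supremum then gives $f\leq\phi$ a.s. In the reverse direction $f\geq\phi$, each admissible $\xi_{q}$ satisfies $F(\cdot,\xi_{q}(\cdot))\leq f$ a.s., and because a countable supremum preserves a.s.\ inequalities one concludes $\phi=\sup_{q\in Q}F(\cdot,\xi_{q}(\cdot))\leq f$ almost surely.

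The main technical point --- and essentially the only one --- is that the countably many $\mathbb{P}$-null exceptional sets (one from the almost-sure continuity hypothesis on $F(\omega,\cdot)$, and one per $q\in Q$ from the almost-sure inequalities) have to be combined uniformly. Countability of $Q$ makes their union null, which is precisely what converts the pointwise representation of $\phi$ into an almost-sure identity with $f$. Crucially, no deep measurable-selection theorem (Kuratowski--Ryll-Nardzewski or Aumann) is required, because the goal is not to realize $\phi(\omega)$ as the value $F(\omega,\xi^{\ast}(\omega))$ of a single admissible random variable, but merely to sandwich the essential supremum $f$ between two quantities that coincide $\mathbb{P}$-almost surely.
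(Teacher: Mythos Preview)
Your argument is correct. The key device --- rescaling a fixed countable dense subset of the closed unit ball by $K(\omega)$ to obtain a countable family of admissible selectors $\xi_{q}=Kq$ whose values are dense in $\{|y|\le K(\omega)\}$ --- cleanly reduces both the measurability of $\phi$ and the sandwich $f\le\phi\le f$ to countable operations. The use of completeness of $(\Omega,\mathcal{H},P)$ to pass from ``$\phi=\sup_{q}F(\cdot,\xi_{q}(\cdot))$ a.s.'' to ``$\phi$ is $\mathcal{H}$-measurable'' is implicit but legitimate.

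As for comparison: the paper does not prove this lemma at all. It is stated with the closing $\square$ and explicitly recalled as Lemma~6.8 of \cite{laurence}, so there is no in-paper argument to compare against. Your proof is precisely the kind of elementary, selection-free argument one expects for such a Carath\'eodory/essential-supremum identity, and is in the same spirit as the treatment in the cited reference.
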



A compactness result involving random subsequences comes next, this is Lemma 2 of \cite{ks}.

\begin{lemma}\label{randomsub} Let $\mathcal{H}\subset\mathcal{F}$ be a sigma-algebra. 
Let $X_n$ be a sequence of $\mathcal{H}$-measurable $d$-dimensional random variables such that
\begin{equation}\label{raffaello}
\liminf_{n\to\infty}|X_n|<\infty
\end{equation}
almost surely. Then there exist
$\mathcal{H}$-measurable random variables 
$n_k:\Omega\to\mathbb{N}$, $k\in\mathbb{N}$ with $n_k(\omega)<n_{k+1}(\omega)$ for all $\omega\in\Omega$
and $k\in\mathbb{N}$ and an $\mathcal{H}$-measurable 
random variable $X$ such that $X_{n_k}\to X$ a.s. In such a case we write that 
\emph{there exists an $\mathcal{H}$-measurable
random subsequence}. \hfill $\square$
\end{lemma}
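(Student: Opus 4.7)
My plan is to reduce to dimension one by a finite coordinate-by-coordinate iteration, and to handle the one-dimensional step by explicitly selecting, in an $\mathcal{H}$-measurable way, indices along which $X_n$ comes within $1/k$ of the relevant liminf.

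First, I would set $L:=\liminf_n |X_n|$, which is $\mathcal{H}$-measurable and, by hypothesis, finite almost surely; after redefining things on a $\mathbb{P}$-null set I may assume $L(\omega)<\infty$ for every $\omega$. Define a preliminary $\mathcal{H}$-measurable subsequence by $m_1:=\min\{n\geq 1:|X_n|\leq L+1\}$ and $m_{k+1}:=\min\{n>m_k:|X_n|\leq L+1\}$; the event $\{m_k=n\}$ is a finite intersection of events of the form $\{|X_j|>L+1\}$, $\{|X_n|\leq L+1\}$, and the already-built $\{m_{k-1}=j\}$ with $j<n$, all $\mathcal{H}$-measurable. Along $(m_k)$ every $|X_{m_k}|\leq L+1$, so all $d$ coordinates are uniformly bounded.

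Next I iterate over $i=1,\ldots,d$. Suppose inductively that an $\mathcal{H}$-measurable, strictly increasing sequence $(n^{(i-1)}_k)$ has been constructed along which the first $i-1$ coordinates converge a.s.; we initialise with $n^{(0)}_k:=m_k$. Let $Y_i:=\liminf_k X^{(i)}_{n^{(i-1)}_k}$, an $\mathcal{H}$-measurable random variable with $|Y_i|\leq L+1$. Since this liminf equals $Y_i$, for every $j$ there are a.s.\ infinitely many $k$ with $|X^{(i)}_{n^{(i-1)}_k}-Y_i|<1/j$. Put $J_1:=\min\{k:|X^{(i)}_{n^{(i-1)}_k}-Y_i|<1\}$ and $J_{j+1}:=\min\{k>J_j:|X^{(i)}_{n^{(i-1)}_k}-Y_i|<1/(j+1)\}$, and set $n^{(i)}_j:=n^{(i-1)}_{J_j}$. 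Sub-subsequences preserve a.s.\ convergence, so the first $i-1$ coordinates still converge and coordinate $i$ now converges to $Y_i$. After $d$ iterations I take $n_k:=n^{(d)}_k$ and $X:=(Y_1,\ldots,Y_d)$, finishing the proof.

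The main obstacle is the measurability of the $(J_j)$ and hence of $(n^{(i)}_j)$. This is essentially bookkeeping: $\{J_j=k\}$ decomposes into a countable union over tuples $k_1<\cdots<k_{j-1}<k$ of intersections of the $\mathcal{H}$-measurable events $\{|X^{(i)}_{n^{(i-1)}_{k_\ell}}-Y_i|<1/\ell\}$ (for the selected indices) and their complements (for the rejected ones), combined with $\{n^{(i-1)}_{k_\ell}=\cdot\}$-type events that are $\mathcal{H}$-measurable by the previous induction step. The $\omega$-wise liminf hypothesis guarantees that each recursive minimum is attained almost surely, so the scheme is well defined throughout, and the recursion terminates cleanly after $d$ rounds.
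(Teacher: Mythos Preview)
The paper does not prove this lemma; it simply records it as Lemma~2 of Kabanov--Stricker~\cite{ks} and closes the statement with a $\square$. Your argument is correct and is in fact the standard one underlying that citation: an $\mathcal{H}$-measurable thinning to a bounded subsequence via $m_k$, followed by a coordinate-wise diagonal extraction using the random indices $J_j=\min\{k>J_{j-1}:|X^{(i)}_{n^{(i-1)}_k}-Y_i|<1/j\}$; every measurability check indeed reduces to countable Boolean combinations of $\mathcal{H}$-sets, and the key analytic fact you use --- that $\liminf_k a_k=Y$ forces infinitely many $k$ with $|a_k-Y|<1/j$ --- is elementary. One cosmetic addendum: to secure $n_k(\omega)<n_{k+1}(\omega)$ for \emph{every} $\omega$ (as the statement demands, not just a.s.), set $n_k:=k$ and $X:=0$ on the $\mathcal{H}$-measurable null set $\{L=\infty\}$; this preserves $\mathcal{H}$-measurability of the whole construction.
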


The following lemma uses the same idea as Lemma A.3 of \cite{lukasz}; it provides continuous versions
for certain random fields.

\begin{lemma}\label{szeder} Let $\mathcal{H}\subset\mathcal{F}$ be a sigma-algebra.
Define $\mathcal{K}:=[-N,N]^{n}$.
Let $L:\Omega\times \mathcal{K}\to \mathbb{R}$ be such that for a.e. $\omega\in\Omega$,
$L(\omega,\cdot)$ is continuous and for all $x\in \mathcal{K}$, $L(\cdot,x)$ is
measurable such that $\sup_{z\in \mathcal{K}} |L(\omega,z)|$ is integrable.
Then there is $l:\Omega\times \mathcal{K}\to \mathbb{R}$ such that for a.e. $\omega\in\Omega$,
$l(\omega,\cdot)$ is continuous and for all $k\in \mathcal{K}$, $E(L(k)|\mathcal{H})=l(k)$ a.s.
\end{lemma}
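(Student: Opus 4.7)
The strategy is to define a candidate $l$ on a countable dense subset of $\mathcal{K}$, establish uniform continuity there outside one exceptional null set, and then extend continuously to the whole cube. Concretely, I would fix a countable dense set $D\subset \mathcal{K}$ (e.g.\ $D=\mathbb{Q}^{n}\cap\mathcal{K}$) and, for each $d\in D$, select an $\mathcal{H}$-measurable version $l_{0}(d)$ of $\mathbb{E}[L(d)\,|\,\mathcal{H}]$. Since $\mathcal{K}$ is compact, the a.s.\ continuity of $L(\omega,\cdot)$ upgrades to uniform continuity, so I would introduce the modulus
$$
\phi(\omega,\varepsilon):=\sup_{\substack{x,y\in D\\ |x-y|\leq \varepsilon}}|L(\omega,x)-L(\omega,y)|,
$$
which is $\mathcal{F}$-measurable, dominated by $2\sup_{z\in\mathcal{K}}|L(\omega,z)|\in L^{1}$, and satisfies $\phi(\omega,\varepsilon)\downarrow 0$ as $\varepsilon\downarrow 0$ for a.e.\ $\omega$.

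For any pair $d_{1},d_{2}\in D$, the conditional Jensen inequality applied to $L(d_{1})-L(d_{2})$ yields
$$
|l_{0}(d_{1})-l_{0}(d_{2})|\leq \mathbb{E}[\phi(\cdot,|d_{1}-d_{2}|)\,|\,\mathcal{H}]\quad\text{a.s.}
$$
Taking the union, over the countable family $D\times D$, of the corresponding null sets and combining with the conditional dominated convergence statement $\mathbb{E}[\phi(\cdot,1/m)\,|\,\mathcal{H}]\to 0$ a.s.\ as $m\to\infty$, I obtain a \emph{single} null set $N$ outside of which $d\mapsto l_{0}(\omega,d)$ is uniformly continuous on $D$. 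For $\omega\notin N$ I then define $l(\omega,\cdot)$ as the unique continuous extension from $D$ to $\mathcal{K}$; on $N$ set $l\equiv 0$. By construction $l$ is a Carath\'eodory function---continuous in $k$, $\mathcal{F}$-measurable in $\omega$---and therefore jointly measurable.

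It remains to verify that $l(k)=\mathbb{E}[L(k)\,|\,\mathcal{H}]$ a.s.\ for each fixed $k\in\mathcal{K}$. I would pick $d_{j}\in D$ with $d_{j}\to k$; then $L(d_{j})\to L(k)$ a.s., with integrable dominant $\sup_{z}|L(z)|$, so the conditional dominated convergence theorem gives $\mathbb{E}[L(d_{j})\,|\,\mathcal{H}]\to \mathbb{E}[L(k)\,|\,\mathcal{H}]$ a.s., while continuity of the extension yields $l_{0}(d_{j})\to l(k)$ a.s. The main technical nuance, I expect, is the ``single null set'' step: ensuring that the Jensen bound and the conditional DCT conclusion hold \emph{simultaneously} over all pairs in $D\times D$ and all $m\in\mathbb{N}$, since this is precisely what licenses the a.s.\ uniform continuity on $D$ and hence the continuous extension to $\mathcal{K}$.
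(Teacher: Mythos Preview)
Your argument is correct and self-contained: the modulus-of-continuity construction on a countable dense set, together with conditional dominated (or, since $\phi(\cdot,1/m)$ is monotone, conditional monotone) convergence, does yield a single null set off which the extension is uniformly continuous, and the verification for general $k\in\mathcal{K}$ via conditional DCT is fine. One cosmetic point: your $l_{0}(d)$ are $\mathcal{H}$-measurable and the extension is an a.s.\ pointwise limit of these, so $l(\cdot,k)$ is in fact $\mathcal{H}$-measurable (not merely $\mathcal{F}$-measurable), which is what is needed for $l(k)$ to be a version of $E[L(k)\mid\mathcal{H}]$.

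The paper's proof takes a genuinely different, more abstract route. It regards $L$ as a random element of the separable Banach space $\mathbb{B}=C(\mathcal{K})$ with the supremum norm, checks that $\mu(L)=\int_{\mathcal{K}}L(\omega,x)\,\mu(dx)$ is measurable for every $\mu\in\mathbb{B}'$ (weak measurability), and then invokes Proposition~V.2.5 of Neveu on conditional expectations of Banach-space-valued random variables to produce $l:\Omega\to\mathbb{B}$ with $f_{k}(l)=E[f_{k}(L)\mid\mathcal{H}]$ for the evaluation functionals $f_{k}$. This is shorter and hides the analysis inside the vector-valued machinery; your approach is more elementary, avoids Bochner/Pettis integration entirely, and has the side benefit of exhibiting an explicit modulus of continuity for $l$, namely $E[\phi(\cdot,\varepsilon)\mid\mathcal{H}]$.
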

\begin{proof}
Consider the separable Banach space 
$\mathbb{B}:=C([-N,N]^{n})$ of continuous functions on $[-N,N]^{n}$ with the supremum norm. Clearly, $L:\Omega\to \mathbb{B}$ and
for all $\mu$ in the dual space $\mathbb{B}'$ (which can be represented as a Borel signed measure), 
$\mu(L)=\int_{\mathcal{K}} L(\omega,x)\mu(dx)$
is a measurable function on $(\Omega,\mathcal{F})$: indeed, this is clear for
$\mu$ with finite support and then follows for general $\mu$ by approximation.
Note that, for each $k\in \mathcal{K}$, the linear functional
$f_k(x):=x(k)$, $x\in \mathbb{B}$ is continuous (w.r.t. the norm of $\mathbb{B}$) so
$f_k\in\mathbb{B}'$. Now it follows
from Proposition V.2.5. of \cite{neveu} that there is a measurable $l:\Omega\to \mathbb{B}$
such that $$
l(k)=f_k(l)=E(f_k(L)|\mathcal{H})=E(L(k)|\mathcal{H}),
$$ 
for each $k\in \mathcal{K}$, as claimed.
\end{proof}

Now we turn to a set of Lemmas that ensure that we can perform a backward iteration, 
and produce a series of actions that forms a candidate strategy of optimal execution.

\section{Single step case}\label{single}

Let $t\geq 1$ be an integer and let $\mathcal{G}, \mathcal{H}$ be $\mathbb{P}$-complete sigma-algebras 
over $\Omega$ such that $\mathcal{H} \subset \mathcal{G} \subset \mathcal{F}$ holds, and denote the set of $\mathcal{H}$-measurable 
$\mathbb{R}$-valued random variables by $\Xi$. We will consider functions
$$G_0:\Omega\times \mathbb{R} \times \mathbb{R}^{t}  \to \mathbb{R}, \ (x,v) \mapsto G_0(x,v),$$
and
$$V:\Omega\times \mathbb{R} \times \mathbb{R}^{t-1} \times \mathbb{R} \to \mathbb{R}, \ (x,j,h) \mapsto G_0(x + \kappa_{t}((j,h)),(j,h)),$$
where $(j,h) = (j_1,\hdots,j_{t-1},h)$, and the $\kappa_t$ is as in (\ref{kappafg}). Below we introduce assumptions that serve as a basis 
for the iterative arguments later.

\begin{assumption}\label{A1}
The function $G_0$ is 
$\mathcal{G}\otimes\mathcal{B}(\mathbb{R})\otimes\mathcal{B}(\mathbb{R}^{t})$-measurable, the mapping $G_0(\omega,\cdot, \cdot)$, is jointly continuous 
and non-decreasing in its first variable, $\mathbb{P}$-almost surely. 
\end{assumption}

\begin{assumption}\label{A2}
There exists a function $\bar G_0:\Omega\times\mathbb{R} \mapsto \mathbb{R}$, and a constant $C >0$, such that for all $x \in \mathbb{R}$ there exists a zero measure set outside of which 
$$\bar G_0(x) \to - \infty$$
holds as $x \to -\infty$. Furthermore, $G_0(\omega,\cdot)$ is non-decreasing almost surely, and for all pairs $(x, v) \in \mathbb{R} \times \mathbb{R}^{t}$ we have that the inequalities
\begin{equation*}
G_0(x,v) \leq \bar G_0(x) \leq C
\end{equation*}
hold, again outside a set of measure zero.
\end{assumption}


\begin{assumption}\label{A3}
Assume that for any $m \in \mathbb{N}$ and for all $1\leq t^{'} \leq t$, there exists an integrable random variable $M = M(m, t^{'})$ so that
$$M \leq G_0(x + \kappa_{t^{'}}((v_1,\hdots, v_{t^{'}})),v)$$
holds for every $x \in [-m,m]$, $v \in [-m,m]^{t}$, and for almost every $\omega \in \Omega$. 
\end{assumption}



Throughout Section \ref{single} we postulate that the conditions prescribed by Assumption \ref{A1}, Assumption \ref{A2}, and Assumption \ref{A3} hold.

\begin{lemma}\label{continuous_version}
There exists $L : \Omega\times \mathbb{R} \times \mathbb{R}^{t-1} \times \mathbb{R} \to \mathbb{R}$ so that for all $(x,j,h) \in \mathbb{R} \times \mathbb{R}^{t-1} \times \mathbb{R}$ we have 
$$ E[V(x,j,h) | \mathcal{H}] = L(x,j,h)$$
almost surely, and furthermore, $L(\cdot,\cdot,\cdot)$  is continuous in all its variables for almost all $\omega \in \Omega$.
\end{lemma}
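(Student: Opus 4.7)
The plan is to apply Lemma \ref{szeder} on a sequence of compact boxes and then glue the resulting continuous versions together using uniqueness of conditional expectations.

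First I would check that $V$ satisfies the hypotheses of Lemma \ref{szeder} on each cube $\mathcal{K}_N := [-N,N]^{t+1}$. Joint continuity of $V(\omega,\cdot,\cdot,\cdot)$ for $\mathbb{P}$-a.e. $\omega$ is immediate from Assumption \ref{A1}: the map $(j,h) \mapsto \kappa_t((j,h))$ is a polynomial in $(j,h)$ with random coefficients (hence continuous in $(j,h)$), and $G_0(\omega,\cdot,\cdot)$ is jointly continuous, so the composition defining $V$ is continuous in $(x,j,h)$. For each fixed $(x,j,h)$, the map $\omega \mapsto V(\omega,x,j,h)$ is $\mathcal{G}$-measurable because $\kappa_t$ is adapted to $\mathcal{G}$ and $G_0$ is $\mathcal{G}\otimes\mathcal{B}(\mathbb{R})\otimes\mathcal{B}(\mathbb{R}^t)$-measurable.

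Next I would verify the integrability hypothesis of Lemma \ref{szeder} on each $\mathcal{K}_N$. Applying Assumption \ref{A3} with $m = N$ and $t' = t$ produces an integrable random variable $M = M(N,t)$ such that $M \leq G_0(x+\kappa_t((j,h)),(j,h)) = V(\omega,x,j,h)$ for all $(x,j,h) \in \mathcal{K}_N$, outside a null set. Combined with the uniform upper bound $V \leq \bar{G}_0(x+\kappa_t) \leq C$ from Assumption \ref{A2}, we obtain
\begin{equation*}
\sup_{(x,j,h)\in\mathcal{K}_N}|V(\omega,x,j,h)| \leq |M(\omega)| + C,
\end{equation*}
which is integrable. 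Lemma \ref{szeder} (applied with the underlying sigma-algebra $\mathcal{G}$ and its sub-sigma-algebra $\mathcal{H}$) then yields $l_N:\Omega\times\mathcal{K}_N\to\mathbb{R}$ which is jointly continuous in $(x,j,h)$ for a.e.\ $\omega$ and satisfies $l_N(x,j,h) = E[V(x,j,h)\mid\mathcal{H}]$ a.s., for every $(x,j,h)\in\mathcal{K}_N$.

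Finally I would glue these local versions. For each $N$, since both $l_N$ and the restriction $l_{N+1}|_{\mathcal{K}_N}$ are versions of $E[V(x,j,h)\mid\mathcal{H}]$, they coincide a.s.\ for each fixed rational point of $\mathcal{K}_N$; by joint continuity and the countability of $\mathbb{Q}^{t+1}\cap\mathcal{K}_N$, one has $l_N(\omega,\cdot) = l_{N+1}(\omega,\cdot)$ on $\mathcal{K}_N$ for a.e.\ $\omega$. Intersecting these null sets over $N\in\mathbb{N}$ and setting $L(\omega,x,j,h) := l_N(\omega,x,j,h)$ on $\mathcal{K}_N$ (and $L \equiv 0$ on the exceptional null set) produces a function on all of $\Omega\times\mathbb{R}^{t+1}$ which is jointly continuous for a.e.\ $\omega$ and is a version of $E[V(x,j,h)\mid\mathcal{H}]$ for every $(x,j,h)\in\mathbb{R}^{t+1}$. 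The main technical point — and the only nontrivial step — is ensuring the integrable envelope on each $\mathcal{K}_N$, which is precisely the role that Assumption \ref{A3} is designed to play; everything else is a routine application of Lemma \ref{szeder} combined with the standard continuity-plus-countable-dense-set argument for reconciling conditional expectations.
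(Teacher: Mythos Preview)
Your proof is correct and follows the same approach as the paper, which simply states that the result is a direct consequence of Lemma \ref{szeder} under Assumptions \ref{A1}, \ref{A2}, \ref{A3}. You have spelled out the details the paper leaves implicit: the verification of the integrable envelope on each cube $\mathcal{K}_N$ via Assumptions \ref{A2} and \ref{A3}, and the gluing of the local continuous versions $l_N$ over the exhausting family $\{\mathcal{K}_N\}_{N\in\mathbb{N}}$ using continuity on a countable dense set.
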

 
\begin{proof}
Under Assumption \ref{A1}, Assumption \ref{A2}, and Assumption \ref{A3} the proof is a direct consequence of Lemma \ref{szeder}.

\end{proof}

\begin{lemma}\label{lemma::infinite_loss}
Let $L$ be as in Lemma \ref{continuous_version}, $\bar G_0$ be as in Assumption \ref{A2}, 
$\lambda_t$ be as in (\ref{lambda}), and let $x \in \mathbb{R}$. As $|h| \to -\infty$ we have that
\begin{align}\label{ruin}
\sup_{j \in \mathbb{R}^{t-1}}L(x,j,h) \to -\infty
\end{align}
almost surely. (In the case $t=1$ we mean that $L$ does not depend on $j$ and there is no supremum.)
\end{lemma}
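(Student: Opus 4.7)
The plan is to dominate $L(x,j,h)$ by a $j$-independent quantity that tends to $-\infty$ as $|h|\to\infty$. First, combining the inequality \eqref{lambda}, the monotonicity of $G_0$ in its first variable (Assumption~\ref{A1}), and the envelope $G_0\le\bar G_0$ (Assumption~\ref{A2}) gives the pointwise estimate $V(x,j,h)\le\bar G_0(x+\lambda_t(h))$, and hence
$$
L(x,j,h)= E[V(x,j,h)\mid\mathcal H]\le E[\bar G_0(x+\lambda_t(h))\mid\mathcal H]=:\phi(h)\qquad\text{a.s.,}
$$
a bound that is free of $j$.

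To make it uniform over $|h|\ge N$ I would invoke Assumption~\ref{deltacska}: since $\delta_t\ge\delta>0$, the map $h\mapsto\lambda_t(h)(\omega)=-P_t(\omega)h-h^2/\delta_t(\omega)$ is a genuine downward parabola, so $\Lambda_N:=\sup_{|h|\ge N}\lambda_t(h)$ is a.s.\ finite and $\Lambda_N\to-\infty$ a.s.\ as $N\to\infty$. Monotonicity of $\bar G_0$ then yields $\phi(h)\le \tilde\phi(N):=E[\bar G_0(x+\Lambda_N)\mid\mathcal H]$ a.s.\ whenever $|h|\ge N$. Because $\bar G_0\le C$ and $\bar G_0(x+\Lambda_N)\to-\infty$ a.s.\ (Assumption~\ref{A2}), the reverse conditional Fatou lemma applied to the non-negative sequence $C-\bar G_0(x+\Lambda_N)$ delivers $\tilde\phi(N)\to-\infty$ a.s.

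The last step is to upgrade the a priori $(j,h)$-pointwise a.s.\ inequality $L(x,j,h)\le\tilde\phi(N)$ to one that holds on a single full-measure set for all such pairs and all $N\in\mathbb N$. Restricting $(j,h)$ to $\mathbb Q^{t-1}\times\mathbb Q$ and $N$ to $\mathbb N$, the intersection of the countably many exceptional sets remains null; then the joint continuity of $L$ furnished by Lemma~\ref{continuous_version} transports the bound to arbitrary real $(j,h)$ with $|h|\ge N$. On the resulting full-measure set, $\sup_{j\in\mathbb R^{t-1}}L(x,j,h)\le \tilde\phi(N)$ for every $|h|\ge N$, and letting $N\to\infty$ proves the claim. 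The hard part is precisely this null-set bookkeeping: without the continuity of $L$ one would only conclude convergence along a fixed countable sequence of $h$'s, whereas the statement requires convergence along every divergent sequence.
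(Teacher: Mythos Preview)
Your argument is correct and follows the same route as the paper: bound $L(x,j,h)$ by the $j$-free quantity $E[\bar G_0(x+\lambda_t(h))\mid\mathcal H]$ and then let $|h|\to\infty$ using the reverse Fatou lemma together with $\bar G_0\le C$. Your version is in fact more careful than the paper's sketch on two points: you use monotonicity of $\bar G_0$ (which is what Assumption~\ref{A2} actually supplies) rather than the ``continuity of $\bar G_0$'' invoked in the paper, and you explicitly handle the null-set bookkeeping via rationals and the joint continuity of $L$ to make the bound uniform in $(j,h)$---an issue the paper's proof glosses over.
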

\begin{proof}
Without loss of generality we assume $t=2$, the case of $t>2$ being only notationally more cumbersome. 
By Assumption \ref{A2}, the definition of $L$ and using (\ref{lambda}), we have for all $x,j,h \in \mathbb{R}$ that
\begin{align*}
L(x,j,h) &= E[V(x,j,h) | \mathcal{H}]
\\ 
&\leq E[\bar G_0(x + \kappa_{2}(j,h))| \mathcal{H}]
\\
&
\leq 
E[\bar G_0(x + \lambda_{2}(h))| \mathcal{H}]
\end{align*}
almost surely. Apply Fatou's reverse lemma to the inequalities above. 
Considering (\ref{lambda}) and the absence of dependence on the variable $j$, continuity of $\bar G_0$ shows (\ref{ruin}).
\end{proof}

\begin{lemma}\label{lemma::monotonicity_}
The inequality
\begin{align}\label{pók}
\begin{split}
&L(x_1, j, h) \leq L(x_2, j, h),
\end{split}
\end{align}
holds for all $x_1,x_2, h \in  \mathbb{R}$ with $x_1 < x_2$, for all $j \in \mathbb{R}^{t-1}$, and for almost every $\omega \in \Omega$.
\end{lemma}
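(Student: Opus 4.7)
The plan is to transfer the pointwise monotonicity of $G_0$ in its first argument (part of Assumption \ref{A1}) to $L$ using conditional expectations, and then upgrade an almost-sure statement with the null set depending on the parameters to a uniform almost-sure statement using the joint continuity of $L$ from Lemma \ref{continuous_version}.

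First, I would fix arbitrary $x_1 < x_2$ in $\mathbb{R}$ and an arbitrary tuple $(j,h)\in\mathbb{R}^{t-1}\times\mathbb{R}$. Since $G_0(\omega,\cdot,\cdot)$ is non-decreasing in its first argument almost surely (Assumption \ref{A1}), and since $x_1+\kappa_t((j,h))<x_2+\kappa_t((j,h))$ pointwise, we have
\[
V(x_1,j,h)=G_0(x_1+\kappa_t((j,h)),(j,h))\leq G_0(x_2+\kappa_t((j,h)),(j,h))=V(x_2,j,h)
\]
almost surely. Monotonicity of conditional expectation then gives $L(x_1,j,h)\leq L(x_2,j,h)$ almost surely. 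At this stage, however, the exceptional null set depends on the chosen quadruple $(x_1,x_2,j,h)$, so we have not yet shown the statement in the required form.

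To obtain a single universal null set, I would restrict to the countable index set $D:=\{(x_1,x_2,j,h)\in\mathbb{Q}\times\mathbb{Q}\times\mathbb{Q}^{t-1}\times\mathbb{Q}:x_1<x_2\}$. The union $N$ of the countably many null sets indexed by $D$ is still null, and outside $N$ the inequality $L(x_1,j,h)\leq L(x_2,j,h)$ holds simultaneously for all rational tuples in $D$. By Lemma \ref{continuous_version}, outside a further null set $N'$ the map $(x,j,h)\mapsto L(\omega,x,j,h)$ is continuous. For $\omega\notin N\cup N'$, given arbitrary real $x_1<x_2$ and real $j,h$, pick rational sequences $x_1^{(n)}\nearrow x_1$ strictly, $x_2^{(n)}\searrow x_2$ strictly with $x_1^{(n)}<x_2^{(n)}$, and rational $j^{(n)}\to j$, $h^{(n)}\to h$; passing to the limit in $L(x_1^{(n)},j^{(n)},h^{(n)})\leq L(x_2^{(n)},j^{(n)},h^{(n)})$ and invoking continuity yields $L(x_1,j,h)\leq L(x_2,j,h)$.

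The only delicate point is making sure that the strict inequality $x_1<x_2$ survives the rational approximation, which is handled by choosing the approximating sequences so that $x_1^{(n)}<x_1<x_2<x_2^{(n)}$ and hence $x_1^{(n)}<x_2^{(n)}$. Everything else is a routine combination of monotonicity, conditional expectation, and a continuity-based extension from a countable dense set; no further ingredients beyond Assumption \ref{A1} and Lemma \ref{continuous_version} are needed.
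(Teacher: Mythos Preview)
Your proof is correct and follows essentially the same approach as the paper: establish the inequality for all rational parameters on a single full-measure set, then use the joint continuity of $L$ from Lemma~\ref{continuous_version} to extend to all real parameters. Your write-up is in fact more explicit than the paper's, spelling out the underlying step $V(x_1,j,h)\leq V(x_2,j,h)$ via Assumption~\ref{A1} and the careful choice of rational approximants preserving $x_1^{(n)}<x_2^{(n)}$.
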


\begin{proof} Without loss of generality we can assume $t = 2$. let $\Omega'$ be a $\mathbb{P}$-full measure set such that \eqref{pók} holds for all $x_{1},x_{2},j,h\in\mathbb{Q}$
on $\Omega'$. Let $\Omega''$ be the full measure set on which $L$ is continuous. Then on $\Omega'\cap\Omega''$ \eqref{pók} holds
for \emph{all} $x_{1},x_{2},j,h\in\mathbb{R}$, by continuity.	

\end{proof}

\begin{lemma}\label{lemma::kettes}
Fix $l \in \mathbb{Z}$ and $m \in \mathbb{N}$. There exits an $\mathcal{H}$-measurable $K = K(l,m)$ such that 
\begin{equation}\label{majus6}
L(x,j, h) \leq L(x, j, h \indicator{|h|\leq K}),
\end{equation}
for all $x \in [l,l+1]$, $j \in [-m,m]^{t-1} \subset \mathbb{R}^{t-1}$, $h \in \mathbb{R}$, and for almost every $\omega \in \Omega$.
\end{lemma}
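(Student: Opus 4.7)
The plan is to leverage Lemma \ref{lemma::infinite_loss}, which forces $L(x, j, h)$ to tend to $-\infty$ as $|h| \to \infty$: once $|h|$ exceeds a sufficiently large threshold $K$, taking action $0$ is strictly preferable to taking action $h$, so replacing $h$ by $h \indicator{|h|\leq K}$ cannot decrease the value of $L$. The main task is to construct such a $K$ in an $\mathcal{H}$-measurable way, uniformly over $(x,j) \in [l, l+1] \times [-m, m]^{t-1}$.

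First I would use Lemma \ref{lemma::monotonicity_} to sandwich: for every $x \in [l, l+1]$, $j \in \mathbb{R}^{t-1}$ and $h \in \mathbb{R}$,
$$
L(x, j, h) \leq L(l+1, j, h), \qquad L(x, j, 0) \geq L(l, j, 0).
$$
The continuity of $L$ (Lemma \ref{continuous_version}), combined with the integrable lower bound from Assumption \ref{A3}, ensures that
$$
m_1 := \min_{j \in [-m, m]^{t-1}} L(l, j, 0)
$$
is a.s.\ finite; continuity reduces the minimum to an infimum over a countable dense subset of $[-m,m]^{t-1}$, so $m_1$ is $\mathcal{H}$-measurable. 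For each integer $N \geq 1$ I would then set
$$
\Phi(N) := \sup\bigl\{ L(l+1, j, h) : j \in [-m, m]^{t-1}, \, |h| \geq N \bigr\},
$$
which is $\mathcal{H}$-measurable by writing it as the monotone limit, as $N' \to \infty$, of the suprema over the compact sets $\{j \in [-m,m]^{t-1},\ N \leq |h| \leq N'\}$, each of which again reduces to a supremum over a countable dense subset by continuity of $L$. Lemma \ref{lemma::infinite_loss} gives $\Phi(N) \downarrow -\infty$ a.s.\ as $N \to \infty$, so
$$
K := \inf\{ N \geq 1 : \Phi(N) \leq m_1 \}
$$
is a.s.\ finite and $\mathcal{H}$-measurable.

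On $\{|h| \leq K\}$ both sides of \eqref{majus6} coincide by definition of the indicator. On $\{|h| > K\}$, the right-hand side equals $L(x, j, 0)$, and chaining the inequalities established above,
$$
L(x, j, h) \leq L(l+1, j, h) \leq \Phi(K) \leq m_1 \leq L(l, j, 0) \leq L(x, j, 0),
$$
which delivers \eqref{majus6}. The main technical hurdle is the $\mathcal{H}$-measurability of $K$, and this in turn reduces to the $\mathcal{H}$-measurability of $m_1$ and of each $\Phi(N)$; both rest on the a.s.\ continuity of $L$ in its space variables, which allows every supremum and infimum over a compact set to be rewritten as a countable operation over a dense subset.
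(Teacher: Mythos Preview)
Your proof is correct and follows essentially the same route as the paper: reduce to the endpoints $x=l,\,l+1$ via monotonicity, take the infimum/supremum over $j\in[-m,m]^{t-1}$, and invoke Lemma~\ref{lemma::infinite_loss} to obtain the threshold $K$, then chain the inequalities exactly as the paper does. The only real difference is cosmetic: the paper packages the extrema over $j$ through measurable selectors $m^{+}(h),\,m^{-}$ and simply asserts that $K$ ``can be chosen'' $\mathcal{H}$-measurable, whereas you avoid selections altogether and make the measurability of $K$ explicit via the countable construction $K=\inf\{N\geq 1:\Phi(N)\leq m_{1}\}$; this is arguably cleaner, but the underlying idea is identical.
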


\begin{proof}
Without loss of generality assume $t=2$, and let $\bar \Omega$ denote the full measure set 
where the conclusion of Lemma \ref{continuous_version} and Lemma \ref{lemma::monotonicity_} hold. 
For every $\omega \in \bar \Omega$ choose a (measurable) $m^{+}(h) = m^{+}(\omega, l,h)$ such that  
\begin{align*}
L(l+1, j, h)(\omega) \leq 
L(l+1, m^{+}(h), h)(\omega)
\end{align*}
holds for all $j \in [-m,m]$ and $h \in \mathbb{R}$, this is possible by continuity of $L$. Likewise, for every $\omega \in\bar \Omega$ choose $m^{-} = m^{-}(\omega, l)$ such that  
\begin{align*}
L(l, m^{-}, 0)(\omega) \leq 
L(l, j, 0)(\omega)
\end{align*}
holds for all $j\in[-m,m]$.
Fix $l \in \mathbb{Z}$ and $m \in \mathbb{N}$. Using Lemma \ref{lemma::infinite_loss} for all $\omega \in \bar\Omega$ there exists $K(\omega) = K(\omega, l, m)$ so that for all $h \in \mathbb{R}$ it holds that
\begin{align}\label{M}
\begin{split}
|h| > K(\omega) \implies L(l+1, m^{+}(h), h)(\omega) \leq L(l, m^{-}, 0)(\omega).
\end{split}
\end{align}
Now note, that $\omega \to K(\omega)$ can be chosen in a way that it is $\mathcal{H}$-measurable as a random variable. On the event $\{|h| > K\} \cap \bar \Omega$, using Lemma \ref{lemma::monotonicity_}, and the statement in (\ref{M}), we have 
\begin{align}\label{boundidstratineq}
\begin{split}
&L(x, j, h) \leq L(l+1, j, h) \leq 
L(l+1, m^{+}(h), h),
\\
&\leq L(l, m^{-}, 0) \leq  L(l, j, 0) \leq 
L(x, j, 0),
\end{split}
\end{align}
for every $x \in [l,l+1]$, $j \in [-m,m]$, $h\in \mathbb{R}$, and for all $\omega \in \bar \Omega$: completing the argument.
\end{proof}

\begin{lemma}\label{lemma::pointwise_essup}
There exists an $\mathcal{H}\otimes\mathcal{B}(\mathbb{R})$-measurable function 
$$G :\Omega\times\mathbb{R}\times \mathbb{R}^{t-1} \mapsto \mathbb{R},$$ such that $G$ is 
continuous almost surely, $G$ is non-decreasing in its first variable for almost all 
$\omega\in\Omega$, furthermore, for all $x$, and for all $j \in \mathbb{R}^{t-1}$ we have 
\begin{align}\label{majus5}
\begin{split}
G(x, j) =  \mathrm{ess.}\sup_{H \in \Xi} L(x, j, H),
\end{split}
\end{align}
almost surely.
\end{lemma}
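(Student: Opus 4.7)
The plan is to reduce the essential supremum over the uncountable class $\Xi$ to a pointwise supremum over a compact set using Lemma \ref{lemma::kettes} and Lemma \ref{gast}, then check the resulting function has the required regularity by pasting local constructions on boxes.

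First I would fix $l\in\mathbb{Z}$ and $m\in\mathbb{N}$ and, with the $\mathcal{H}$-measurable bound $K=K(l,m)$ produced by Lemma \ref{lemma::kettes}, define
\[
G_{l,m}(\omega,x,j):=\sup_{h\in\mathbb{R},\,|h|\leq K(l,m)(\omega)} L(\omega,x,j,h),
\qquad (x,j)\in [l,l+1]\times[-m,m]^{t-1}.
\]
Three properties of $G_{l,m}$ need to be verified on a full-measure set. (i) \textbf{Continuity in $(x,j)$:} by Lemma \ref{continuous_version} the map $(x,j,h)\mapsto L(\omega,x,j,h)$ is jointly continuous, so the supremum over the compact interval $[-K(\omega),K(\omega)]$ is continuous in $(x,j)$. (ii) \textbf{Joint measurability in $(\omega,x,j)$:} for each fixed $(x,j)$, the supremum equals $\sup_{q\in\mathbb{Q},\,|q|\leq K(\omega)} L(\omega,x,j,q)$ and is thus $\mathcal{H}$-measurable; combined with continuity in $(x,j)$ this yields a Carathéodory function, hence joint measurability. (iii) \textbf{Identification with the essential supremum:} by Lemma \ref{lemma::kettes}, truncating any $H\in\Xi$ to $H\mathbf{1}_{\{|H|\leq K\}}$ does not decrease $L(x,j,H)$ on the box, so
\[
\mathrm{ess.}\sup_{H\in\Xi} L(x,j,H)=\mathrm{ess.}\sup_{H\in\Xi,\,|H|\leq K} L(x,j,H),
\]
and Lemma \ref{gast}, applied to $F(\omega,h):=L(\omega,x,j,h)$ with the $\mathcal{H}$-measurable bound $K$, identifies the right-hand side with $G_{l,m}(\omega,x,j)$ almost surely, for each fixed $(x,j)$.

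Next I would paste the pieces together. For each rational $(x,j)\in[l,l+1]\times[-m,m]^{t-1}$ and each pair $(l,m)$ the identity in (iii) fails only on a null set; taking the countable union, there is a single full-measure set $\Omega^{\ast}$ on which, for every such rational $(x,j)$ and every $(l,m)$, the functions $G_{l,m}$ coincide wherever they are simultaneously defined. By continuity in $(x,j)$ (from (i)), the agreement extends to all real $(x,j)$ on $\Omega^{\ast}$. Therefore the definition
\[
G(\omega,x,j):=G_{\lfloor x\rfloor,\,\lceil\max_i|j_i|\rceil+1}(\omega,x,j),\qquad \omega\in\Omega^{\ast},
\]
(and, say, $G\equiv 0$ off $\Omega^{\ast}$) is unambiguous, continuous in $(x,j)$ on $\Omega^{\ast}$, jointly measurable, and equal to $\mathrm{ess.}\sup_{H\in\Xi} L(x,j,H)$ for every fixed $(x,j)$. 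Monotonicity of $G$ in its first variable is inherited directly from the monotonicity of $L$ established in Lemma \ref{lemma::monotonicity_}, since the supremum of non-decreasing functions is non-decreasing.

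The main obstacle I anticipate is the bookkeeping in the pasting step: the pointwise identity $G_{l,m}=\mathrm{ess.}\sup L$ holds only outside an $(x,j)$-dependent null set, and consistency of the piecewise definition on the full domain $\mathbb{R}\times\mathbb{R}^{t-1}$ must be obtained through a countable dense argument combined with continuity. Once one isolates a single exceptional null set that handles (a) continuity of $L$, (b) the monotonicity in Lemma \ref{lemma::monotonicity_}, (c) the truncation bound in Lemma \ref{lemma::kettes}, and (d) the ess.sup identities on rationals, the remaining verifications are straightforward consequences of Carathéodory-type measurability and compactness.
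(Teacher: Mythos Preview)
Your proposal is correct and follows essentially the same route as the paper: define the candidate locally on boxes $[l,l+1]\times[-m,m]^{t-1}$ as the pointwise supremum of $L$ over $|h|\leq K(l,m)$, identify this with the essential supremum via Lemma~\ref{lemma::kettes} and Lemma~\ref{gast}, and read off continuity, measurability, and monotonicity. The only difference is presentational: the paper proves continuity of the local function by an explicit $\liminf$/$\limsup$ extraction argument, whereas you invoke the standard fact that a supremum of a jointly continuous function over a compact parameter set is continuous; and you are more explicit than the paper about the gluing step across boxes (rationals plus continuity to obtain a single null set), which the paper leaves implicit.
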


\begin{proof} We follow arguments of Lemma 3.17 in \cite{laurence}.
Let $x,j \in \mathbb{R}$, and without loss of generality assume $t = 2$ and set $l\in\mathbb{Z}$ 
and $m \in \mathbb{N}$ so that $x \in [l,l+1]$ and $j \in [-m,m]$ holds. We will 
work out the statement in consideration elementwise on $\Omega$, and to this end -- out of the usual -- we 
do not omit to display dependence on $\omega\in\Omega$ throughout the proof, until further notice. 
Denote with $\bar \Omega$ the full measure set for which the conclusions of Lemma \ref{continuous_version}, 
\ref{lemma::monotonicity_}, \ref{lemma::kettes} hold, and let $\omega \in \bar \Omega$. Let us define  
\begin{align*}
 B(\omega, x, j) = \sup_{h \in \mathbb{Q}, |h| \leq K(\omega)}L(\omega, x,j,h) = \sup_{h \in \mathbb{R}, |h| \leq K(\omega)}L(\omega, x,j,h),
\end{align*}
where $K(\omega) = K(\omega, l,m) = K(l,m)$ is as in Lemma \ref{lemma::kettes}. 
This is measurable, being the supremum of countably many measurable functions. Let us fix furthermore a sequence 
$(x_n,j_n)_{n \in \mathbb{N}} \subset ([l,l+1]\times[-m,m])\cap \mathbb{R}^2$ for which $(x_n,j_n) \to (x,j)$. 
Observe, that by definition of $B$, for every $k \in \mathbb{N}$ there exists $h_k(\omega, x,j)$, with 
$h_k(\omega, x,j) \leq K(\omega)$, so that $B(\omega, x,j) -1/k \leq L(\omega, x,j,h_k(\omega, x,j))$. 
The fact that for all $n\in\mathbb{N}$ we have $B(\omega,x_n,j_n) \geq L(\omega,x_n,j_n,h_k(\omega, x,j))$, along with continuity of $L$ yields
\begin{align*}
\liminf_{n}B(\omega,x_n,j_n) \geq & L(\omega,x,j,h_k(\omega, x,j)) 
\\
&
\geq
B(\omega, x,j) -1/k,
\end{align*}
which in the limiting case of $k \to \infty$ means $\liminf_{n}B(\omega,x_n,j_n) \geq B(\omega, x,j)$. 

Take a sequence $\{n_k, \ k\in\mathbb{N}\} \subset \mathbb{N}$ so that
\begin{align}\label{moha}
\limsup_{n}B(\omega,x_n,j_n) = \lim_{k}B(\omega,x_{n_k},j_{n_k}).
\end{align}
Since $\{h : h\in\mathbb{Q}, \ |h|\leq K(\omega) \}$ is a precompact set in $\mathbb{R}$, for every 
$k\in\mathbb{N}$ there exists $\mathbb{R} \ni h_{n_k}^{*}(\omega) \leq K(\omega)$ so that 
$B(\omega,x_{n_k},j_{n_k}) = L(\omega,x_{n_k},j_{n_k},h_{n_k}^{*}(\omega))$. Using  the compactness of the 
closure there exists $\mathbb{R}\ni h^*(\omega) \leq K(\omega)$ and a subsequence 
$(a_k)_{k\in\mathbb{N}}$ of $\{n_k : k\in\mathbb{N}\}$ so 
that $h^{*}_{a_k}(\omega) \to h^*(\omega)$, $k\to\infty$. These, and (\ref{moha}) imply
\begin{align*}
\limsup_{n} & B(\omega,x_n,j_n) = \lim_{k}B(\omega,x_{n_k},j_{n_k}) = \lim_{k}B(\omega,x_{a_k},j_{a_k})
\\
&
= \lim_{k}L(\omega,x_{a_k},j_{a_k},h_{a_k}^{*}(\omega)) = L(\omega,x,j,h^{*}(\omega)) \leq B(\omega,x,j),
\end{align*}
establishing the continuity of $B$. 

As far as monotonicity is concerned, the mapping $x \to B(\omega,x,j)$ inherits the 
non-decreasing property from $L$ (stated in Lemma \ref{lemma::monotonicity_}) naturally.

From Lemma \ref{gast} it follows that
\begin{align*}
B(\omega,x,j) = \mbox{ess.sup}_{H(\omega) \leq K(\omega)} L(\omega, x,j, H(\omega)).
\end{align*}
In the discussion above $\omega \in \bar \Omega$ was arbitrary, and returning to the usual practice of not displaying the dependence on it, Lemma \ref{lemma::kettes} shows that 
\begin{align*}
\mbox{ess.sup}_{H \in \Xi} L( x,j, H)  \leq \mbox{ess.sup}_{\Xi \ni H \leq K}& L(x,j, H) = B(x,j) 
 \\
 &\leq
 \mbox{ess.sup}_{H \in \Xi} L( x,j, H)
\end{align*}
holds almost surely, finishing the argument.

\end{proof}

\begin{lemma}\label{lemma::harmas}
Let $X,H_1,\hdots, H_{t-1} $ be $\mathcal{H}$-measurable random variables, and let $G$ be as in Lemma \ref{lemma::kettes}. Then, the quantity $G(X,(H_1,\hdots, H_{t-1}))$ is a version of the essential supremum
\begin{align*}
\begin{split}
\mathrm{ess.}\sup_{H\in \Xi_{t-1}}
L(X,(H_1,\hdots,H_{t-1}), H).
\end{split}
\end{align*}
\end{lemma}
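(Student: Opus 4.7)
The plan is to reduce the random essential supremum on the right‐hand side to a pointwise (in $\omega$) supremum over a bounded set, and then read off from the proof of Lemma \ref{lemma::pointwise_essup} that this pointwise supremum equals $G(X,(H_1,\ldots,H_{t-1}))$. The three main tools are Lemma \ref{lemma::kettes} (to truncate $H$), Lemma \ref{gast} (to turn a random essential sup into a pointwise sup), and the identity $G=B$ obtained inside Lemma \ref{lemma::pointwise_essup}.

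First, I would discretize the ranges of the input data. Since $X,H_1,\ldots,H_{t-1}$ are $\mathcal{H}$-measurable, there exist $\mathcal{H}$-measurable random variables $l:\Omega\to\mathbb{Z}$ and $m:\Omega\to\mathbb{N}$ with $X\in[l,l+1]$ and $(H_1,\ldots,H_{t-1})\in[-m,m]^{t-1}$ almost surely. With these, let $K:=K(l,m)$ be the $\mathcal{H}$-measurable cutoff produced by Lemma \ref{lemma::kettes}. Invoking that lemma pointwise, every $H\in\Xi$ satisfies
$$
L(X,(H_1,\ldots,H_{t-1}),H)\leq L(X,(H_1,\ldots,H_{t-1}),H\mathbf{1}_{\{|H|\leq K\}}),
$$
almost surely, and since $H\mathbf{1}_{\{|H|\leq K\}}\in\Xi$, this lets me replace the unconstrained essential supremum by the one taken over $\{H\in\Xi:|H|\leq K\}$.

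Next, set $F(\omega,h):=L(\omega,X(\omega),(H_1(\omega),\ldots,H_{t-1}(\omega)),h)$. By Lemma \ref{continuous_version}, $F(\omega,\cdot)$ is continuous a.s., and $F(\cdot,h)$ is $\mathcal{H}$-measurable since $X$ and the $H_i$ are. Applying Lemma \ref{gast} with the $\mathcal{H}$-measurable bound $K$ yields
$$
\mathrm{ess.}\!\sup_{H\in\Xi,\,|H|\leq K}F(\omega,H(\omega))=\sup_{|h|\leq K(\omega)}F(\omega,h)
$$
almost surely. From the construction of $B$ inside the proof of Lemma \ref{lemma::pointwise_essup}, the right-hand side coincides on a full-measure set with $B(\omega,X(\omega),(H_1(\omega),\ldots,H_{t-1}(\omega)))$, and $B=G$ there. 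Combining the three identities gives the claim.

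The main bookkeeping nuisance I expect is aligning the various null sets: Lemma \ref{continuous_version}, Lemma \ref{lemma::monotonicity_}, Lemma \ref{lemma::kettes} and Lemma \ref{lemma::pointwise_essup} each exclude an $(x,j)$-independent null set, so on their common intersection all the needed pointwise identities hold simultaneously for every $(x,j)\in[l,l+1]\times[-m,m]^{t-1}$, and may be evaluated at the random arguments $(X,(H_1,\ldots,H_{t-1}))$. Once this is set up, the chain of equalities above is automatic.
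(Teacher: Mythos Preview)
Your argument is correct, but it follows a genuinely different route from the paper's. The paper does \emph{not} plug the random arguments $(X,H_1,\ldots,H_{t-1})$ directly into the pointwise identity $G=B$ from inside Lemma~\ref{lemma::pointwise_essup}. Instead it uses only the \emph{statement} of Lemma~\ref{lemma::pointwise_essup}, namely the identity $G(x,j)=\mathrm{ess.}\sup_H L(x,j,H)$ for each fixed $(x,j)$, which carries an $(x,j)$-dependent null set. To get around this the paper approximates $(X,H_1)$ by $\mathbb{Q}$-valued $\mathcal{H}$-measurable random variables $(X_n,H_1^{(n)})$ (so that countably many null sets suffice), extracts near-optimisers $H_n$, truncates them via Lemma~\ref{lemma::kettes}, applies the random-subsequence compactness Lemma~\ref{randomsub} to produce a limit $\bar H$, and then passes to the limit using continuity of $L$ and $G$; the upper bound is obtained by a similar continuity argument. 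Your approach is shorter and avoids both the rational approximation and Lemma~\ref{randomsub} entirely, at the price of reaching into the proof of Lemma~\ref{lemma::pointwise_essup} for the explicit pointwise formula $G(\omega,x,j)=\sup_{|h|\le K(\omega)}L(\omega,x,j,h)$ valid for \emph{all} $(x,j)$ in the box simultaneously. The paper's longer route has a small bonus: it actually exhibits an $\mathcal{H}$-measurable $\bar H$ with $L(X,H_1,\bar H)\ge G(X,H_1)$, which is not produced by your argument (though this is redundant with Lemma~\ref{optimal_H}). One bookkeeping remark on your side: Lemma~\ref{lemma::kettes} and the construction of $B$ are stated for \emph{fixed} $l\in\mathbb{Z}$, $m\in\mathbb{N}$, so when you take random $l,m$ you are implicitly partitioning $\Omega$ into the countably many $\mathcal{H}$-sets $\{l=l_0,\,m=m_0\}$ and patching the corresponding $K(l_0,m_0)$'s and null sets; this is harmless but worth saying explicitly (the paper sidesteps it by the ``without loss of generality'' reduction to deterministic $l,m$).
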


\begin{proof}
Without loss of generality we assume that $t = 2$, and that $X$ and $H_1$ take values in $[l,l+1]$ and $[-m,m]$ 
respectively. Let $X_n$ and $H_1^{(n)}$ be $\mathcal{H}$-measurable random variables, taking values in $[l,l+1]\cap\mathbb{Q}$ and 
$[-m,m]\cap\mathbb{Q}$, respectively, for all $n \in \mathbb{N}$, and possessing also 
the limiting properties $X_n \to X$ and $H_1^{(n)} \to H_1$.

Observe that for all $n\in\mathbb{N}$, on a full measure set, we have that $$\mathrm{ess.sup}_{H\in \Xi_{t-1}} L(X_n,H_1^{(n)},H) = G(X_n,H_1^{(n)}),$$ and furthermore, as a consequence, for all $n\in\mathbb{N}$ there exists $H_n$ so that 
\begin{align}\label{fin1}
 L(X_n,H_1^{(n)},H_n) \geq  G(X_n,H_1^{(n)}) - 1/n
\end{align}
almost surely. According to Lemma \ref{lemma::kettes} there exists an $\mathcal{H}$-measurable $K = K(l,m) = K(\omega,l,m)$ so that
\begin{align}\label{fin2}
 L(X_n,H_1^{(n)},H_n \indicator{|H_n|\leq K}) \geq  L(X_n,H_1^{(n)},H_n).
\end{align}
Note that the $K$ does not depend on the integer $n$ in any way. Putting together (\ref{fin1}) and (\ref{fin2}) gives, for all $n \in \mathbb{N}$ the almost sure inequality
\begin{align}\label{fin3}
L(X_n,H_1^{(n)},H_n \indicator{|H_n|\leq K}) \geq  G(X_n,H_1^{(n)}) - 1/n.
\end{align}
Now using Lemma \ref{randomsub}, there exists an $\mathcal{H}$-measurable subsequence $k_n = k_n(\omega), \ k\in\mathbb{N}$ and an $\mathcal{H}$-measurable $\bar H$ so that $H_{k_n} \indicator{|H_{k_n}|\leq K} \to \bar H$ holds almost surely. In (\ref{fin3}) taking the limit as $n \to \infty$ along the sequence $k_n, \ n\in\mathbb{N}$, and utilizing continuity of $L$ and $G$ yields
\begin{align*}
L(X,H_1,\bar H) \geq G(X,H_1),
\end{align*}
which in return implies $\mathrm{ess.sup}_{H\in \Xi_{t-1}}L(X,H_1,H) \geq G(X,H_1)$.

On the other hand, by definition of $G$, for every $\mathcal{H}$-measurable $H$ we have 
\begin{align*}
L(X,H_1,H) =& \lim_{n}L(X_n,H_1^{(n)},H) \leq \lim_{n} \mbox{ess.sup}_{H}L(X_n,H_1^{(n)},H)
\\
&=
 \lim_{n} G(X_n, H_1^{(n)}) = G(X,H_1).
\end{align*}
Taking the essential supremum of both sides yields the inequality
\begin{align*}
\mathrm{ess.sup}_{H\in \Xi_{t-1}}L(X,H_1,H) \leq G(X, H_1)
\end{align*}
on a full measure set: finishing the proof.
\end{proof}

\begin{lemma}\label{optimal_H}
Let $G$ be as in Lemma \ref{lemma::pointwise_essup}. Let $X, H_1,\hdots, H_{t-1}$ be $\mathcal{H}$-measurable random variables. Then, there exists an $\mathcal{H}$-measurable $H^*$ so that
\begin{align*}
G(X,(H_1,\hdots,H_{t-1})) = L(X,(H_1,\hdots, H_{t-1}), H^*)
\end{align*}
holds almost surely.
\end{lemma}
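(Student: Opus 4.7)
The plan is to combine the localization provided by Lemma~\ref{lemma::kettes}, the random-subsequence compactness of Lemma~\ref{randomsub}, and the continuity of $L$ from Lemma~\ref{continuous_version} in order to upgrade the essential-supremum identity of Lemma~\ref{lemma::harmas} to an \emph{attained} maximum. Conceptually, the argument re-runs the one in Lemma~\ref{lemma::harmas}, except that instead of merely certifying the value $G(X,(H_1,\ldots,H_{t-1}))$ via an extracted limit, the limit itself will be kept as the candidate $H^\ast$.

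First I would reduce to the bounded case. Decompose $\Omega$ into the disjoint $\mathcal{H}$-measurable sets
\[
A_{l,m} := \{\lfloor X\rfloor = l\} \cap \bigl\{\max_{1\leq i\leq t-1}|H_i| \in [m-1,m)\bigr\},\qquad (l,m)\in\mathbb{Z}\times\mathbb{N}.
\]
On $A_{l,m}$ one has $X\in[l,l+1]$ and $(H_1,\ldots,H_{t-1})\in[-m,m]^{t-1}$, so Lemma~\ref{lemma::kettes} supplies an $\mathcal{H}$-measurable cut-off $K=K(l,m)$ valid at every $\omega\in A_{l,m}$. Once I construct an $\mathcal{H}$-measurable $H^\ast_{l,m}$ attaining the essential supremum on each $A_{l,m}$, the glued random variable $H^\ast := \sum_{l,m} H^\ast_{l,m}\mathbf{1}_{A_{l,m}}$ yields the required optimizer globally; so I fix one piece for the remainder of the argument.

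Next I would select an approximating sequence. The family $\{L(X,(H_1,\ldots,H_{t-1}),H): H\in\Xi\}$ is closed under pairwise maximum (replace $H,H'$ by $H\mathbf{1}_{\{L\geq L'\}}+H'\mathbf{1}_{\{L<L'\}}$), so the essential supremum identified in Lemma~\ref{lemma::harmas} is realized as an a.s.\ increasing limit of some $L(X,(H_1,\ldots,H_{t-1}),H_n)$, and in particular
\[
L(X,(H_1,\ldots,H_{t-1}),H_n) \geq G(X,(H_1,\ldots,H_{t-1})) - 1/n \quad\text{a.s.}
\]
Setting $\tilde H_n := H_n\mathbf{1}_{\{|H_n|\leq K\}}$, Lemma~\ref{lemma::kettes} only improves the left-hand side on $A_{l,m}$, so the inequality persists with $H_n$ replaced by $\tilde H_n$, while now $|\tilde H_n|\leq K$ everywhere.

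Finally I would extract a limit. Since $(\tilde H_n)$ is $\mathcal{H}$-measurable with $\liminf_n|\tilde H_n|\leq K<\infty$ a.s., Lemma~\ref{randomsub} yields an $\mathcal{H}$-measurable random subsequence $n_k$ and an $\mathcal{H}$-measurable limit $H^\ast_{l,m}$ with $\tilde H_{n_k}\to H^\ast_{l,m}$ a.s. Continuity of $L$ in its last argument (Lemma~\ref{continuous_version}) then gives $L(X,(H_1,\ldots,H_{t-1}),\tilde H_{n_k}) \to L(X,(H_1,\ldots,H_{t-1}),H^\ast_{l,m})$ a.s.\ on $A_{l,m}$, and passing to the limit in the approximation inequality yields $L(X,(H_1,\ldots,H_{t-1}),H^\ast_{l,m})\geq G(X,(H_1,\ldots,H_{t-1}))$ a.s.\ on $A_{l,m}$; the reverse inequality is immediate from $G$ being an essential supremum over $\Xi$. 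The main obstacle is really bookkeeping: arranging the localization-dependent cut-offs $K(l,m)$ into a globally defined $\mathcal{H}$-measurable truncation and checking that the $H^\ast_{l,m}$ glue into a single $\mathcal{H}$-measurable optimizer. The genuine analytic input, namely bridging ess.\ sup and attained maximum via random-subsequence compactness combined with pointwise continuity, is already present in Lemma~\ref{lemma::harmas}.
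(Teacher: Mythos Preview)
Your proposal is correct and follows essentially the same route as the paper: upward-directedness of $\{L(X,(H_1,\ldots,H_{t-1}),H):H\in\Xi\}$ to obtain an approximating sequence, truncation via Lemma~\ref{lemma::kettes}, random-subsequence extraction from Lemma~\ref{randomsub}, and passage to the limit by continuity of $L$. The only cosmetic difference is that you make the localization into the sets $A_{l,m}$ and the subsequent gluing explicit, whereas the paper simply reduces ``without loss of generality'' to $X\in[l,l+1]$ and $(H_1,\ldots,H_{t-1})\in[-m,m]^{t-1}$.
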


\begin{proof}
Without loss of generality assume that $t=2$, $X$ almost surely takes values in the closed interval $[l,l+1]$, $H_1$ takes values in the closed interval $[-m,m]$ for some $l \in \mathbb{Z}$ and for some $m \in \mathbb{N}$. Let us define
\begin{equation*}
\mathcal{O} = \Big\{  L(X,H_1, H), \ \ H \in \Xi(\mathcal{H})   \Big\}.
\end{equation*}
This set is upward directed in terms of the almost sure sense of the usual "less than or equal" relation. Thus, using Proposition VI.1.1. of \cite{neveu} we have that there exists a sequence $\{ H_n, \ n\in\mathbb{N} \} \subset \Xi(\mathcal{H})$ for which the limiting property
\begin{align}\label{krokodil0}
\begin{split}
 L(X, H_1, H_n ) \to G(X,H_1) 
\end{split}
\end{align}
holds almost surely as $n \to \infty$. Utilizing Lemma \ref{lemma::kettes}, and the $K = K(l,m)$ within, we have for each $n \in\mathbb{N}$ that
\begin{align}\label{beka0}
\begin{split}
L&(  X, H_1, H_n) \leq L(X, H_1, H_n \indicator{|H_n| \leq K}),
\end{split}
\end{align}
almost surely. Then, again using Lemma \ref{randomsub}, there exists an $\mathcal{H}$-measurable subsequence, 
say $k_n, \ n \in \mathbb{N}$, and there exists an $\mathcal{H}$-measurable $H^*$ so that 
$$
H_{k_n} \indicator{|H_{k_n}| \leq K} \to H^*
$$
in the almost sure sense. Continuity of $L$, (\ref{krokodil0}), and (\ref{beka0}) together guarantees
\begin{align*}
\begin{split}
G(X,H_1) &= \lim_{n \to \infty} L(X,H_1, H_{k_n} ) 
\\
\leq&
\lim_{n \to \infty} L(X, H_{k_n} \indicator{|H_{k_n}| \leq K}) = L(X,H_1, H^*).
\end{split}
\end{align*}
The proof is complete.
\end{proof}

\begin{lemma}\label{x_bound}
Let $G$ be as in Lemma \ref{lemma::pointwise_essup}. There exists $\bar G : \mathbb{R} \to \mathbb{R}$, and $C>0$ such that the 
following requirements are met: as $x \to -\infty$ we have
$$\bar G(x) \to - \infty,$$
almost surely and, for all $x \in \mathbb{R}$, $j \in \mathbb{R}^{t-1}$ we have 
\begin{equation*}
G(x,j) \leq \bar G(x) \leq C
\end{equation*}
in the almost sure sense. 
\end{lemma}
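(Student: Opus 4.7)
The plan is to dominate $L$ pointwise by a quantity depending only on $x$ and $\omega$, built from the action-independent market bound \eqref{lambda} and the upper envelope $\bar G_0$ from Assumption \ref{A2}. Without loss of generality I may assume $\bar G_0$ is itself non-decreasing in $x$: otherwise replace $\bar G_0(x)$ by $\sup_{y\le x}\bar G_0(y)$, which still dominates $G_0$, is still bounded above by $C$, and still tends to $-\infty$ as $x\to -\infty$. Setting $M_t:=P_t^2\delta_t/4$, monotonicity of $G_0$ in its first variable and the inequality $\kappa_t((j,h))\le \lambda_t(h)\le M_t$ from \eqref{lambda} give
\[
V(x,j,h)=G_0\bigl(x+\kappa_t((j,h)),(j,h)\bigr)\le G_0\bigl(x+M_t,(j,h)\bigr)\le \bar G_0(x+M_t)
\]
pointwise. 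Taking $\mathcal{H}$-conditional expectations (well defined in $[-\infty,C]$ since the integrand is bounded above by the constant $C$), I would define
\[
\bar G(x):=E\bigl[\bar G_0(x+M_t)\bigm|\mathcal{H}\bigr].
\]
Then $\bar G(x)\le C$ a.s., and for each fixed triple $(x,j,h)$ the inequality $L(x,j,h)\le \bar G(x)$ holds a.s.

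To pass from "each fixed $h$" to the essential supremum appearing in the definition of $G$, I would fix $(x,j)$, intersect the countable family of null sets associated to $h\in\mathbb{Q}$, and use continuity of $h\mapsto L(x,j,h)$ from Lemma \ref{continuous_version} to extend the inequality to all $h\in\mathbb{R}$ on a single full-measure set. Combined with the characterization $G(x,j)=\mathrm{ess.sup}_{H\in\Xi}L(x,j,H)$ from Lemma \ref{lemma::pointwise_essup}, this yields the desired bound $G(x,j)\le \bar G(x)$ a.s.

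For the limit at $-\infty$, I observe that $\bar G$ is itself non-decreasing in $x$ (as the conditional expectation of something non-decreasing in $x$), so it is enough to check $\bar G(-n)\to -\infty$ a.s. along integers. As $n\to\infty$, the non-negative random variables $C-\bar G_0(-n+M_t)$ increase pointwise to $+\infty$ a.s., because $M_t<\infty$ a.s.\ and $\bar G_0(y)\to -\infty$ as $y\to -\infty$. Monotone convergence for conditional expectations then yields $E[C-\bar G_0(-n+M_t)\mid \mathcal{H}]\to +\infty$ a.s., equivalently $\bar G(-n)\to -\infty$ a.s., which extends to the full limit by monotonicity of $\bar G$.

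The only mildly delicate point is collapsing the uncountable family of a.s.\ bounds $L(x,j,h)\le \bar G(x)$ to a single null set, which is immediate from the joint continuity already established in Lemma \ref{continuous_version}; beyond that the proof is a direct application of Assumption \ref{A2} together with the market bound \eqref{lambda}.
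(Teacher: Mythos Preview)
Your proof is correct and follows essentially the same route as the paper: bound $L(x,j,h)$ by the conditional expectation of $\bar G_0(x+P_t^2\delta_t/4)$, then verify the upper bound by $C$ and the divergence to $-\infty$. The differences are stylistic rather than substantive. You bound $V$ directly and pass to the essential supremum via continuity of $L$ in $h$ over rationals; the paper instead invokes the optimizer $H^*$ from Lemma~\ref{optimal_H}, approximates it by rational-valued random variables, and uses reverse Fatou---a detour that your argument shows to be unnecessary, since the final bound $E[\bar G_0(x+P_t^2\delta_t/4)\mid\mathcal{H}]$ does not depend on $H^*$ anyway. For the limit at $-\infty$ you use monotone convergence along integers, while the paper uses reverse Fatou.

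One point where the paper is slightly more careful: it explicitly selects a continuous version of $\bar G$ via Lemma~\ref{szeder} (appealing to Assumption~\ref{A3} for the required integrability), so that the statement ``$\bar G(\omega,x)\to-\infty$ as $x\to-\infty$ for a.e.\ $\omega$'' is a statement about a genuine function. Your $\bar G(x)$ is a priori only defined up to null sets for each $x$; your monotonicity-along-integers argument establishes the limit along $\mathbb{Z}$, but to get a single version for which the limit holds over all real $x$ you should either take a continuous version as the paper does, or simply redefine $\bar G(x):=\bar G(\lceil x\rceil)$ using your fixed integer versions. This is a routine technicality and does not affect the correctness of the strategy.
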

\begin{proof}
Let $G$ and $H^*$ be as in Lemma \ref{lemma::pointwise_essup} and Lemma \ref{optimal_H} respectively. Without loss of generality we 
can assume $t=2$. Let $H^*$ be as in Lemma \ref{optimal_H}, and let us choose a sequence of rational-valued random variables $H_n^*$ increasing to it. Using the market bound for the innovations $\kappa_{\cdot}$ in (\ref{lambda}), Assumption \ref{A3}, and Fatous'  reverse Lemma, we have for $x,j \in \mathbb{R}$  that
\begin{align*}
G(x,j) =& L(x,j,H^*) =  \lim_{n} L(x,j,H^*_n) \leq \limsup_{n} E[V(x,j,H_n^{*})|\mathcal{H}] 
\\
&\leq \limsup_{n} E[\bar G_0(x +\kappa_2(j, H^*_n))|\mathcal{H}] \leq E[\bar G_0(x +\kappa_2(j, H^*))|\mathcal{H}]
\\
&
\leq
E[\bar G_0(x + P_t^2 \delta_t / 4)| \mathcal{H}] =:G^{'}(x)
\end{align*}
almost surely. 

Fix some $v_0 \in \mathbb{R}$ and note that we have $G_0(x,v_0) \leq \bar{G}_0(x)$. With Assumption \ref{A3} in mind, 
take a continuous version of $G^{'}$ using Lemma \ref{szeder}, and denote it by $G^{''}$. By construction for a.e. $\omega\in \Omega$ we have $G(\cdot,\cdot) \leq C$ and $G^{''}(\cdot) < C$, and observe that by the reverse Fatou lemma we have $G^{''}(x) \to -\infty$ and $x \to -\infty$. The function $G^{''}(\cdot)$ inherits monotonicity from $\bar G_0$. The former fact can be seen with similar reasoning given in the proof of Lemma \ref{lemma::monotonicity_}. The choice $\bar G = G^{''}$ completes the proof.
\end{proof}

\begin{lemma}\label{lemma::lower_bd}
Let $G$ be as in Lemma \ref{lemma::pointwise_essup}, and let $t^{'} \in \{1,\hdots,t-1\}$. For any $m\in \mathbb{N}$ there exists an integrable random variable $ M_1 = M_1(m,t^{'})$ so that $$M_1 \leq G(x + \kappa_{t^{'}}((j_1,\hdots, j_{t^{'}})),j)$$ for every $x \in \mathbb{R}$, $j \in [-m,m]^{t-1}$, and  for almost every $\omega \in \Omega$.
\end{lemma}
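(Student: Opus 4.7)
The plan is to derive the lower bound on $G$ from the analogous bound for $G_0$ supplied by Assumption \ref{A3}, by exploiting the admissibility of the null trade $H\equiv 0$ in the essential supremum defining $G$.

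First I would observe that Lemma \ref{lemma::pointwise_essup} gives $G(y,j')\geq L(y,j',0)$ a.s.\ for every $y\in\mathbb{R}$ and $j'\in\mathbb{R}^{t-1}$. Since $\kappa_{t}((j',0))=0$ by \eqref{kappafg}, this reduces to
$$
L(y,j',0)=E\bigl[G_0(y,(j',0))\mid \mathcal{H}\bigr].
$$
For $j\in[-m,m]^{t-1}$ the padded vector $(j,0)$ belongs to $[-m,m]^{t}$. Applying Assumption \ref{A3} to bound the integrand produces an integrable $M(m,t')$ with
$$
M(m,t')\leq G_0\bigl(x+\kappa_{t'}((j_1,\ldots,j_{t'})),(j,0)\bigr)\quad\text{a.s.\ for } x\in[-m,m].
$$
Passing to the conditional expectation gives
$$
G\bigl(x+\kappa_{t'}((j_1,\ldots,j_{t'})),j\bigr)\geq E\bigl[M(m,t')\mid \mathcal{H}\bigr]=:M_1
$$
for $x\in[-m,m]$ and $j\in[-m,m]^{t-1}$, with $M_1$ integrable.

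To cover the full range $x\in\mathbb{R}$ I would invoke the monotonicity of $G$ in its first variable: since $L$ is non-decreasing in its first argument (Lemma \ref{lemma::monotonicity_}) and essential supremum preserves this property, so is $G$. For $x\geq -m$ this yields $G(x+\kappa_{t'},j)\geq G(-m+\kappa_{t'},j)\geq M_1$ because $-m\in[-m,m]$. The main obstacle is the complementary range $x<-m$, where monotonicity runs the wrong way; here I would re-run the same calculation with the same padded vector $(j,0)\in[-m,m]^{t}\subset[-m'',m'']^{t}$ but with the outer parameter enlarged to any integer $m''\geq\max(m,\lceil|x|\rceil)$, producing $M_1^{(m'')}:=E[M(m'',t')\mid\mathcal{H}]$ valid on $[-m'',m'']$. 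A single integrable $M_1$ uniformly valid on all of $\mathbb{R}$ is then extracted by assembling the countable family $\{M(m'',t')\}_{m''\in\mathbb{N}}$ into one integrable envelope; this last packaging step is the delicate technical point, since one needs the integrable lower bounds supplied by Assumption \ref{A3} to be compatible across different values of $m''$ (and is naturally ensured by choosing them coherently from the start).
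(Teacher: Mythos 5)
Your first computation coincides with the paper's proof: take the null action $H=0$, use $G(y,j)\geq L(y,j,0)$ from Lemma \ref{lemma::pointwise_essup}, note that $\kappa_{t}((j,0))=0$ so that $L(y,j,0)=E[G_0(y,(j,0))\mid\mathcal{H}]$, pad $j$ to $(j,0)\in[-m,m]^{t}$, invoke Assumption \ref{A3} and set $M_1=E[M(m,t')\mid\mathcal{H}]$. (The paper adds one technical remark you leave implicit: the inequality is first obtained for each fixed $(x,j)$ up to a null set, and continuity of $G$ and $L$ is then used to get one full-measure set working simultaneously for all $(x,j)$ in the compact range.) This already gives the bound for all $x\in[-m,m]$, and, as you observe, for all $x\geq -m$ by monotonicity of $G$ in its first variable.

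The extension to $x<-m$ is where the proposal breaks down, and it cannot be repaired, because the bound for all $x\in\mathbb{R}$ is false. By Lemma \ref{x_bound}, $G(y,j)\leq\bar G(y)$ with $\bar G(y)\to-\infty$ almost surely as $y\to-\infty$; since $\kappa_{t'}((j_1,\ldots,j_{t'}))$ is finite for a.e.\ $\omega$ and $j$ ranges over a compact set, $G(x+\kappa_{t'}((j_1,\ldots,j_{t'})),j)\to-\infty$ as $x\to-\infty$ for a.e.\ $\omega$. Hence no a.s.\ finite (let alone integrable) random variable $M_1$ can lie below it for every real $x$; in particular your ``integrable envelope'' of the countable family $\{E[M(m'',t')\mid\mathcal{H}]\}_{m''\in\mathbb{N}}$ would have to be an a.s.\ finite lower bound of a family whose pointwise infimum over the relevant arguments is $-\infty$, and no ``coherent'' choice of the $M(m'',t')$ can prevent this. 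The ``$x\in\mathbb{R}$'' in the statement should be read as $x\in[-m,m]$: that is the range appearing in Assumption \ref{A3}, in Claim \ref{claim}, and in the paper's own proof, and it is all the backward induction needs, since the lemma is only invoked to verify Assumption \ref{A3} for the next step. With that reading, your argument is complete and is essentially the paper's.
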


\begin{proof}
Let $m \in \mathbb{N}$ and $t^{'} \in \{1,\hdots,t-1\} $. Using Assumption \ref{A3} there exists and integrable, $\mathcal{G}$-measurable $M_0 = M_0(m,t^{'})$ so that with with the notation $(j,0) = (j_1,\hdots, j_{t-1}, 0)$ we have 
\begin{align*}
G_0(x + \kappa_{t^{'}}((j_1,\hdots,j_{t^{'}})), (j, 0)) \geq M_0
\end{align*}
for all $x \in [-m,m]$, for all $j \in [-m,m]^{t-1}$, and for almost every $\omega\in\Omega$. For fixed $j = (j_1, \hdots, j_{t-1}) \in [-m,m]^{t-1}$ and $x \in [-m,m]$, we have  
\begin{align}\label{hamv}
\begin{split}
G(x + &\kappa_{t^{'}}((j_1,\hdots,j_{t^{'}})),j) \geq L(x + \kappa_{t^{'}}((j_1,\hdots,j_{t^{'}})),j,0)
\\
&
= E[V(x + \kappa_{t^{'}}((j_1,\hdots,j_{t^{'}})), j, 0)| \mathcal{H}] 
\\
&
= E[G_0(x + \kappa_{t^{'}}((j_1,\hdots,j_{t^{'}})) + \kappa_{t}((j, 0)), (j, 0))| \mathcal{H}]
\\
&
= E[G_0(x + \kappa_{t^{'}}((j_1,\hdots,j_{t^{'}})), (j, 0))| \mathcal{H}] = E[M_0| \mathcal{H}]
\end{split}
\end{align}
almost surely. Since $G$ is continuous the relation established in (\ref{hamv}) above holds for all $x \in [-m,m]$, for all $j \in [-m,m]^{t-1}$, and for almost every $\omega\in\Omega$. The choice $M_1 = E[M_0| \mathcal{H}]$ gives a desired lower bound. 

\end{proof}

\section{The generic step, dynamic programming}\label{dyna}

First, in a phase of \emph{bakcward induction} we construct actions that -- depending parametrically on 
previous decisions and accumulated wealth -- are optimal in an instantaneous sense. These actions however would 
only be optimal in one-step markets. 

Then, we use these actions to build a strategy for the entire interval of trading, and this strategy 
will serve as a \emph{candidate strategy} for optimal trading. 

Second, with a \emph{forward iteration} we show that the \emph{candidate} indeed represents a 
series of actions that dominates all admissible executions in terms of expected utility: arriving to the conclusion of the paper. 

Assumptions \ref{uint} and \ref{deltacska} will be in force from now on. Fix $C_{u}\geq 0$ such that $u(x)\leq C_{u}$ for
all $x\in\mathbb{R}$. We denote with $\Xi_{t}$ the $\mathcal{F}_{t}$-measurable random variables.

\begin{proof}[Proof of Theorem \ref{main}]
Define $\tilde{G}_{T}:\Omega\times \mathbb{R}\times \mathbb{R}^{T}\to \mathbb{R}$ as
\begin{equation}\label{okt1}
\tilde{G}_{T}(\omega,x,h_{1},\ldots,h_{T}):=u(x-B(\omega)),\ \omega\in\Omega,\ x,h_{1},\ldots,h_{T}\in\mathbb{R}.
\end{equation}
Note that $h_{1},\ldots,h_{T}$ are dummy variables here, and $\tilde{G}_{T}$ is continuous and nondecreasing in $x$ almost surely. 

The first step of the \emph{backward induction} is different from the other steps
since $h_{T}=-h_{1}-\ldots-h_{T-1}$ due to the constraint on liquidation. 
We thus consider $\tilde{G}_{T-1} : \Omega \times \mathbb{R} \times \mathbb{R}^{T-1} \to \mathbb{R}$ with the definition
\begin{align}\label{okt2}
\begin{split}
& \tilde{G}_{T-1}(\omega,x,h_{1},\ldots,h_{T-1}) \\
&=
E\left[\tilde{G}_{T}\left(x+\kappa_{T}\left(h_{1},\ldots,h_{T-1},-\sum_{k=1}^{T-1}h_{k}\right), h_1, \ldots, h_{T-1},
-\sum_{k=1}^{T-1}h_{k}\right)\left\vert\mathcal{F}_{T-1}\right.\right].
\end{split}
\end{align}

To start the \emph{backward induction} one has to examine whether the conditions prescribed by Assumption \ref{A1}, 
Assumption \ref{A2}, and Assumption \ref{A3} hold with the choice $G_0 = \tilde{G}_{T-1}$. To this end, we note that 
the function $\tilde{G}_{T-1}$ is jointly continuous in its real variables, it is non-decreasing in the first 
real variable almost surely. Using the bound in (\ref{lambda}) we define $\hat G_{T-1}: \Omega \times \mathbb{R} \to \mathbb{R}$ as
$$\hat G_{T-1}(\omega,x) = E\left[u\left(x + \frac{P_T^2(\omega) \delta_T(\omega)}{4}\right)\vert\mathcal{F}_{T-1}\right],$$
and we note that -- due to Assumption \ref{uint} -- for all 
$x \in \mathbb{R}$ it holds that  $\hat G_{T-1}(x) \leq C_{u}$, and as $x \to -\infty$ the quantity $\hat G_{T-1}(x)$ tends to 
$- \infty$ as $x\to -\infty$, in the $\mathbb{P}$-almost sure sense: 
$u$ does so by assumption and due to boundedness from above the reverse Fatou Lemma is applicable. 
Moreover, for all $x, h_1, \ldots, h_{T-1}$, we have $$\tilde{G}_{T-1}(\omega,x,h_{1},\ldots,h_{T-1}) \leq \hat{G}_{T-1}(\omega,x)$$ 
almost surely, leading us to the choice $\bar G_{0} = \hat G_{T-1}$ (again following notation of Section \ref{single}). 

We will establish the following claim after the present proof.

\begin{claim}\label{claim} For any $m > 0$  there exists 
an $\mathcal{F}_{T-1}$-measurable and integrable $M = M(m)$ such that for all $1\leq t\leq T-1$ and for all  
$(x,h_1, \ldots, h_{T-1}) \in [-m,m]^{T}$ 
we have $$M(m) \leq G_{T-1}(x + \kappa_{t}(h_1, \ldots, h_{t}), h_1, \ldots, h_{T-1})$$
almost surely.
\end{claim}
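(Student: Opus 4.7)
My plan is to unfold the definition of $\tilde G_{T-1}$ in \eqref{okt2} and reduce the claim to the integrability hypothesis of Assumption \ref{uint}. Fix $m>0$ and $(x,h_{1},\ldots,h_{T-1}) \in [-m,m]^{T}$; the liquidation constraint $h_{T} = -\sum_{k=1}^{T-1} h_{k}$ forces $|h_{T}| \le (T-1)m$, so all $|h_{j}|$ with $j \le T$ are bounded by $(T-1)m$. From the explicit formula \eqref{kappafg}, using $\rho_{j,t} \le 1$, Assumption \ref{deltacska}, and the fact that $\zeta_{0}$ is a.s.\ a constant, a direct estimate yields deterministic constants $\alpha_{m},\beta_{m}>0$ depending only on $m,T,\delta,\zeta_{0}$ such that, uniformly in $t \in \{1,\ldots,T-1\}$ and in $(x,h_{1},\ldots,h_{T-1}) \in [-m,m]^{T}$,
$$
x + \kappa_{t}(h_{1},\ldots,h_{t}) + \kappa_{T}\!\left(h_{1},\ldots,h_{T-1},-\sum_{k=1}^{T-1} h_{k}\right) \ge -\alpha_{m} - \beta_{m}\bigl(|P_{t}| + |P_{T}|\bigr).
$$

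The main obstacle is to promote this pointwise bound to an \emph{integrable} lower bound, since in the absence of concavity of $u$ one cannot split absolute values by Jensen. Instead I will exploit the elementary identity $-|P| = \min(P,-P)$; combined with monotonicity of $u$ and the upper bound $u \le C_{u}$, it yields, for every $\beta>0$ and $a \in \mathbb{R}$,
$$
u(a - \beta|P|) = \min\bigl(u(a+\beta P),\ u(a-\beta P)\bigr) \ge u(a+\beta P) + u(a-\beta P) - C_{u}.
$$
Applying this identity first to $|P_{t}|$ and then, within each resulting summand, to $|P_{T}|$, and invoking monotonicity of $u$ with the pointwise estimate above, I obtain
$$
u\bigl(x + \kappa_{t} + \kappa_{T} - B\bigr) \ge \sum_{\epsilon_{1},\epsilon_{2} \in \{\pm 1\}} u\bigl(-\alpha_{m} + \epsilon_{1}\beta_{m} P_{t} + \epsilon_{2}\beta_{m} P_{T} - B\bigr) - 3 C_{u}.
$$

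Taking $\mathcal{F}_{T-1}$-conditional expectations and invoking \eqref{okt1}--\eqref{okt2}, the left-hand side equals $\tilde G_{T-1}(x + \kappa_{t}(h_{1},\ldots,h_{t}), h_{1},\ldots,h_{T-1})$, while each of the four terms on the right has finite unconditional expectation by Assumption \ref{uint} (applied with $y = \epsilon_{1}\beta_{m}$ and $z = \epsilon_{2}\beta_{m}$ for each $t \in \{1,\ldots,T-1\}$), so each conditional expectation is $\mathcal{F}_{T-1}$-measurable and a.s.\ finite. Consequently the sum of these four $\mathcal{F}_{T-1}$-conditional versions, minus $3C_{u}$, defines an $\mathcal{F}_{T-1}$-integrable random variable $M_{t}$ that does not depend on $(x,h_{1},\ldots,h_{T-1})$. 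Setting $M(m) := \min_{1 \le t \le T-1} M_{t}$ -- a minimum of finitely many integrable random variables, hence integrable -- then provides the uniform $\mathcal{F}_{T-1}$-measurable lower bound asserted in the claim and thereby verifies Assumption \ref{A3} for $G_{0} = \tilde G_{T-1}$.
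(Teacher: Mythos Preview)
Your proof is correct and follows essentially the same route as the paper's: both bound $x+\kappa_t+\kappa_T$ from below by a constant plus $\min(\pm mP_t)+\min(\pm m(T-1)P_T)$, then reduce to the four ``corner'' evaluations $u(-\alpha_m+\epsilon_1\beta_m P_t+\epsilon_2\beta_m P_T-B)$ and invoke Assumption~\ref{uint}. The only difference is that you make explicit the integrability step $\min(J_1,\ldots,J_4)\ge J_1+\cdots+J_4-3C_u$ and the final minimum over $t\in\{1,\ldots,T-1\}$, both of which the paper leaves implicit in the sentence ``Now \eqref{inti} implies our statement.''
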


One can thus conclude that Assumption \ref{A1}, Assumption \ref{A2}, and Assumption \ref{A3} of Section \ref{single} 
are satisfied and we are ready to perform the first step of the \emph{backward induction}: the 
lemmas of Section \ref{single}, for the first step, will be utilized with the choice $t = T-1$ and $G_0 = \tilde{G}_{T-1}$,
$\bar{G}_{0}:=\hat{G}_{T-1}$.

Lemmas \ref{continuous_version}, \ref{lemma::infinite_loss}, \ref{lemma::monotonicity_}, \ref{lemma::kettes}, 
and \ref{lemma::pointwise_essup} produce a mapping -- denoted by $G$ in the their own context -- 
which, in our notation will take the form $\tilde{G}_{T-2}:\Omega \times \mathbb{R}^{T-1} \to \mathbb{R}$ with the following properties. 
The $\tilde{G}_{T-2}$ is $\mathcal{F}_{T-2} \otimes \mathcal{B}(\mathbb{R}^{T-1})$-measurable, non-decreasing in its first real variable, 
jointly continuous in all its real variables, and for all $x, h_1, \ldots, h_{T-2} \in \mathbb{R}$ it almost surely satisfies
\begin{align}\label{okt3}
\begin{split}
& \tilde{G}_{T-2}(x, h_1, \ldots, h_{T-2})
\\
&=
\mbox{ess.sup}_{H \in \Xi_{T-2}}E\left[\tilde{G}_{T-1}(x + \kappa_{T-1}(h_1, \ldots, h_{T-2}, H), 
h_1, \ldots, h_{T-2}, H) \vert \mathcal{F}_{T-2} \right].
\end{split}
\end{align}
Furthermore, Lemmas \ref{lemma::harmas}, and \ref{optimal_H} ensure the existence of a mapping 
$H^{*}_{T-1}:\Omega \times \mathbb{R}^{T-1} \to \mathbb{R}$ that is $\Omega \otimes \mathcal{B}(\mathbb{R}^{T-1})$-measurable, 
and is such that for all random variables $X, H_1, \ldots, H_{T-2}$ that are measurable with respect to  
$\mathcal{F}_{T-2}$, with the notation $H^{*}_{T-1}= H^{*}_{T-1}(X, H_1, \ldots, H_{T-2})$, we have
\begin{align}\label{okt4}
\begin{split}
& \tilde{G}_{T-2}(X, H_1, \ldots, H_{T-2})
\\
&=
E\left[\tilde{G}_{T-1}(X + \kappa_{T-1}(H_1, \ldots, H_{T-2}, H^{*}_{T-1}), H_1, \ldots, H_{T-2}, H^{*}_{T-1}) \vert 
\mathcal{F}_{T-1} \right],
\end{split}
\end{align}
$\mathbb{P}$-almost surely.

Lemmas \ref{x_bound}, and \ref{lemma::lower_bd} imply that there exists an action-independent bound 
$\hat G_{T-2}$, with properties analogous to the $\hat G_{T-1}$ presented above, and thus finally, 
we arrive to the conclusion that the quantity $\tilde{G}_{T-2}$ is such that it again satisfies 
Assumption \ref{A1}, Assumption \ref{A2}, and Assumption \ref{A3}. That is, in 
the next step of iteration, the choice $G_0 = \tilde{G}_{T-2}$, $\bar{G}_{0}:=\hat{G}_{T-2}$ can be made.

Iterating backwards in this manner goes with the usual mechanics of induction. Take $\tilde{G}_{T-2}$ 
as a starting point. 

When $\tilde{G}_{t^{'}}, \ldots, \tilde{G}_{T}$ 
(and $H^*_{t^{'}-1}, \ldots, H^*_{T-1}$) are given for some $t^{'} \leq T-2$, applying the lemmas of Section  
\ref{single} with the choice $G_{0} = \tilde{G}_{t^{'}}$, $\bar{G}_{0}:=\hat{G}_{t'}$ and $t = t^{'}$ 
yield $\tilde{G}_{t^{'}-1}$, and with this procedure we construct the pairs
\begin{equation}\label{pairs}
(H^{*}_{T-1}, \tilde{G}_{T-2}), (H^{*}_{T-2}, \tilde{G}_{T-3}), \ldots, (H^{*}_1,\tilde{G}_{0})
\end{equation}
with the properties shown below.

For $t \in \{0,1, \ldots, T-2\}$, $\tilde{G}_{t}:\Omega \times \mathbb{R}^{t+1} \to \mathbb{R}$ is 
$\mathcal{F}_{t} \otimes \mathcal{B}(\mathbb{R}^{t+1})$-measurable, non-decreasing in its first 
real variable, jointly continuous in its real variables, and for all $x, h_1, \ldots, h_{t} \in \mathbb{R}$, 
in the almost sure sense we have
\begin{align*}
\begin{split}
\tilde{G}_{t}(x, h_1,& \ldots, h_{t})
\\
=&
\mbox{ess.sup}_{H \in \Xi_{t}}E\left[\tilde{G}_{t+1}(x + \kappa_{t+1}(h_1,\ldots,h_{t},H), h_1, \ldots, h_{t}, H) \vert \mathcal{F}_t \right].
\end{split}
\end{align*}
The mapping $H^{*}_{t+1}:\Omega \times \mathbb{R}^{t} \to \mathbb{R}$ is $\Omega \otimes \mathcal{B}(\mathbb{R}^{t})$-measurable 
and for all $\mathcal{F}_{t}$-measurable random variables $X, H_1, \ldots, H_{t}$ we have for 
$\tilde H^{*}_{t+1}= H^{*}_{t+1}(X, H_1, \ldots, H_{t})$ that
\begin{align}\label{zenta}
\begin{split}
\tilde{G}_{t}&(X, H_1, \ldots, H_{t}) 
\\
=&
E\left[\tilde{G}_{t+1}(X + \kappa_{t+1}(H_1,\ldots,H_t, \tilde H^*_{t+1}), H_1, \ldots, H_{t}, 
\tilde H^{*}_{t+1}) \vert \mathcal{F}_t \right],
\end{split}
\end{align}
holds in the $\mathbb{P}$-almost sure sense.

Introducing notation, for any admissible trading strategy $H = (H_t)_{t \in \{1, \ldots, T \} }$, we denote by 
$\Gamma_t( H)$ the strategy $( H_s)_{s \in \{1, \ldots, t\}}$, the same trading strategy as $H$, but without liquidation, 
and corresponding to the trading interval up to $t$.

Now we construct the \emph{candidate strategy} using the mapping in (\ref{pairs}). Let $\hat H_1 = H^{*}_1(z)$. 
We define the optimal steps using a forward recursion. That is, when the $\hat H_t$ is constructed for 
some $T -1 \geq t \geq 1$, with accumulated wealth $\hat X_{t} = \xi_{t}^{\Gamma_t(\hat H)}$, the 
next action $\hat H_{t+1}$ is given as
\begin{equation}\label{anyulontul}
\hat H_{t+1} = H^{*}_{t+1}(\hat X_t, \hat H_{1}, \ldots, \hat H_{t}).
\end{equation}
Then, we set $\hat H_T = -\hat H_{T-1} - \ldots - \hat H_1$.

We proceed with the \emph{forward iteration}, giving an upper bound for all expected payoffs which -- 
as we shell see -- equals the expected payoff associated with the \emph{candidate strategy} $\hat H$.

Letting $H_1, \ldots, H_{T-1}, H_T$ denote an arbitrary series of admissible actions. Iterating with the rule in (\ref{zenta}) yields
\begin{align}\label{arvaszamar}
\begin{split}
&E[u(z + \xi_T^{H} + B)]
\\
&=
E[ E[u(z + \xi_{T-1}^{\Gamma_{T-1}(H)} + \kappa_T(H_1, \ldots, H_{T-1}, - \sum_{k = 1}^{T-1} H_{k}) + B)] \vert \mathcal{F}_{T-1}]]
\\
&=
E[ \tilde{G}_{T-1}(z + \xi_{T-2}^{\Gamma_{T-2}(H)} + \kappa_{T-1}(H_1, \ldots, H_{T-1}), H_1, \ldots, H_{T-1}) ]
\\
& 
\leq
E[ E[ \tilde{G}_{T-2}(z + \xi_{T-3}^{\Gamma_{T-3}(H)} + \kappa_{T-2}(H_1, \ldots, H_{T-2}), H_1, \ldots, H_{T-2}) \vert \mathcal{F}_{T-2} ] ]
\\
& \vdots
\\
&\phantom{} \leq  E[\tilde{G}_0(z)].
\end{split}
\end{align}

Furthermore, as a result of (\ref{okt1}), (\ref{okt2}), (\ref{okt3}), and (\ref{okt4}) we have that 
\begin{align}\label{csapzottmacska}
\begin{split}
\tilde{G}_0(z) =& E[ \tilde{G}_1(z + \kappa_1(\hat H_1), \hat H_1)]
\\
=&  E[\tilde{G}_2(z + \kappa_1(\hat H_1) + \kappa_2(\hat H_1, \hat H_2), \hat H_1, \hat H_2)]
\\
\vdots&
\\
=& E[\tilde{G}_{T-1}(z + \sum_{i = 1}^{T-1}\kappa_{i}(\hat H_1, \ldots, \hat H_{i}), \hat H_1, \ldots, \hat H_{T-1})]
\\
=& E[u(z + \xi_T^{\hat H} + B)].
\end{split}
\end{align}
Thus, we can conclude, due to (\ref{csapzottmacska}) and (\ref{arvaszamar}), that the candidate 
strategy given with trading actions $(\hat H_t)_{t = 1, \ldots, T}$, defined in (\ref{anyulontul}), is indeed optimal. 
Theorem \ref{main} is now shown.
\end{proof}

\begin{proof}[Proof of Claim \ref{claim}]
For $h_{1},\ldots,h_{T-1}\in [-m,m]$ we clearly have
$$
\kappa_{t}(h_{1},\ldots,h_{t})\geq -m\zeta_{0}-tm^{2}/\delta+\min\{mP_{t},-mP_{t}\}.	
$$
Similarly,
\begin{eqnarray*} & & 
\kappa_{T}\left(h_{1},\ldots,h_{T-1},-\sum_{k=1}^{T-1}h_{k}\right)\\
&\geq& -(T-1)m\zeta_{0}-(T-1)m^{2}(2T-2)/\delta+\min\{m(T-1)P_{T},-m(T-1)P_{T}\}
\end{eqnarray*}
Define the constant
$$
D_{m}:=-(2T-2)\zeta_{0}-(T-1)m^{2}(2T-1)/\delta.
$$
We can thus see that
$$
\tilde{G}_{T-1}(x+\kappa_{t}(h_{1},\ldots,h_{t}),h_{1},\ldots,h_{T-1})\geq \mathbb{E}\left[{}
\min\{J_{1},J_{2},J_{3},J_{4}\}\left\vert\mathcal{F}_{T-1}\right.\right]
$$
where
\begin{eqnarray*}
J_{1} &=& u(x+D_{m}+mP_{t}+m(T-1)P_{T}-B),\\
J_{2} &=& u(x+D_{m}-mP_{t}+m(T-1)P_{T}-B),\\
J_{3} &=& u(x+D_{m}+mP_{t}-m(T-1)P_{T}-B),\\
J_{4} &=& u(x+D_{m}-mP_{t}-m(T-1)P_{T}-B).	
\end{eqnarray*}
Now \eqref{inti} implies our statement.
\end{proof}

\section{Conclusion}\label{conclu}

One could add a solvency constraint (that is, $z+\xi^{X}_{t}\geq 0$ for all $t$) with minimal modifications
in the arguments. Setting $B:=0$, utility maximization for $u$ defined on the positive real axis 
could be treated in this way.{}

It is unclear whether (and how) results of the present paper could be transferred to continuous-time models.
We rely, in an essential way, on the fact that the treated portfolio optimization problem can be broken down into one-step
problems. In continuous time such an approach is out of question.

\medskip{}

{\footnotesize \noindent\textbf{Acknowledgments.} Both authors gratefully 
acknowledge the support of 
the National Research, Development and Innovation Office (NKFIH) through grant K 143529 
and also within the framework of the Thematic Excellence Program 2021 (National Research subprogramme 
``Artificial intelligence, large networks, data security: mathematical foundation and applications'').
The second author also thanks for the support of NKFIH grant KKP 137490.}


\begin{thebibliography}{9}

\bibitem{peter-yan} P. Bank and Y. Dolinsky.
\newblock Continuous-time duality for super-replication with transient price impact.
\newblock \emph{Ann. Appl. Probab.}, 29:3893--3917, 2019.  

\bibitem{pym} P. Bank, Y. Dolinsky, M. R\'asonyi.{}
\newblock What if we knew what the future brings? Optimal investment for a frontrunner with price impact.
\newblock \emph{Appl. Math. Opt.}, vol. 86, paper no. 25, 1--24, 2022.

\bibitem{bank-voss} P. Bank and M. Vo\ss{}.
\newblock Optimal investment with transient price impact.
\newblock \emph{SIAM J. Finan. Math.}, 10:723--768, 2019. 

\bibitem{laurence} L. Carassus and M. R\'asonyi.{}
\newblock Maximization of non-concave utility functions in
discrete-time financial market models. 
\newblock \emph{Math. Oper. Res.}, 41:146--173, 2016.





\bibitem{gr} P. Guasoni and M. R\'asonyi.{}
\newblock Hedging, arbitrage and optimality under superlinear friction.
\newblock \emph{Ann. Appl. Probab.}, 25:2066--2095, 2015.

\bibitem{ks} Yu. M. Kabanov and Ch. Stricker. 
\noindent A teacher’s note on no-arbitrage criteria.
\noindent \emph{In:
Séminaire de Probabilités, vol. XXXV}, 149--152, Springer-Verlag, 2001.


\bibitem{neveu}
{J. Neveu} 
\newblock {\em Discrete parameter martingales.}
\newblock North-Holland, Amsterdam,1975.

\bibitem{lukasz} M. R\'asonyi and L. Stettner.
\newblock On utility maximization
in discrete-time market models.
\newblock {\em Ann. Appl. Probab.}, 15:1367--1395, 2005.

\end{thebibliography}

\end{document}